\DeclareMathOperator{\logit}{logit}
\DeclareMathOperator{\expit}{expit}
\DeclareMathOperator{\var}{Var}
\newcommand\p[1]{{\left(#1\right)}}
\newcommand\bracket[1]{{\left[#1\right]}}
\newcommand\mynorm[1]{{\left\|#1\right\|}}
\begin{document}
\title{Revisiting the attacker's knowledge in inference attacks against Searchable Symmetric Encryption}
\titlerunning{Revisiting the attacker's knowledge in SSE attacks}
%
\author{Marc Damie\inst{1,2}\orcidlink{0000-0002-9484-4460}\thanks{Corresponding author: \email{m.f.d.damie@utwente.nl}}
    \and
    Jean-Benoist Leger\inst{3,4}\orcidlink{0000-0003-3409-5303} \and\\
    Florian Hahn \inst{2}\orcidlink{0000-0003-4049-5354} \and
    Andreas Peter \inst{5}\orcidlink{0000-0003-2929-5001}}
\authorrunning{M. Damie et al.}

\institute{Inria, France \and
    University of Twente, The Netherlands\and
    Université de technologie de Compiègne, CNRS, Heudiasyc, France \and
    Université Paris-Saclay, AgroParisTech, INRAE, UMR MIA Paris-Saclay \and
    Carl von Ossietzky Universität Oldenburg, Germany}
\maketitle              
\begin{abstract}
    Encrypted search schemes have been proposed to address growing privacy concerns.
    However, several leakage-abuse attacks have highlighted some security vulnerabilities.
    Recent attacks assumed an attacker's knowledge containing data ``similar'' to the indexed data.
    However, this vague assumption is barely discussed in literature: how likely is it for an attacker to obtain a ``similar enough'' data?

    Our paper provides novel statistical tools usable on any attack in this setting to analyze its sensitivity to data similarity.
    First, we introduce a mathematical model based on statistical estimators to analytically understand the attackers' knowledge and the notion of similarity.
    Second, we conceive statistical tools to model the influence of the similarity on the attack accuracy.
    We apply our tools on three existing attacks to answer questions such as: is similarity the only factor influencing accuracy of a given attack?
    Third, we show that the enforcement of a maximum index size can make the ``similar-data'' assumption harder to satisfy.
    In particular, we propose a statistical method to estimate an appropriate maximum size for a given attack and dataset.
    For the best known attack on the Enron dataset, a maximum index size of 200 guarantees (with high probability) the attack accuracy to be below 5\%.
    \keywords{Searchable Encryption  \and Attacks \and Statistics.}
\end{abstract}

\section{Introduction}
With the increasing popularity of cloud services, there is a growing concern about the privacy and confidentiality issues induced by such practices.
Song et al. \cite{song_practical_2000} proposed the first construction to search on encrypted data.
Curtmola et al. \cite{curtmola_searchable_2006} later used this result to build a Searchable Symmetric Encryption (SSE) scheme.
This scheme enables an efficient keyword search across encrypted documents.

Such SSE schemes build an encrypted index that can be queried to obtain the (encrypted) documents containing the queried keyword.
These schemes leak information exploitable by an attacker to recover the plaintext query.
Our work studies the single-keyword search SSE schemes leaking access and search patterns.
Our focus will be on the type of attack that received the most attention: passive attacks assuming attacker-known data \cite{cash_leakage-abuse_2015,damie_highly_2021,dijkslag_passive_2022,gui_rethinking_2023,islam_access_2012,lambregts_volume_2022,oya_hiding_2021,oya_ihop_2022,pouliot_shadow_2016,xu_leakage-abuse_2023}.
These attacks build a co-occurrence matrix from the leakage and compare it to a keyword co-occurrence matrix computed on an attacker-known dataset.

\paragraph{Passive query-recovery attacks}
Blackstone et al. \cite{blackstone_revisiting_2020} defined two types of passive query-recovery attacks against SSE: inference attacks and known-data attacks.
The known-data attacks assume that the attacker's data is indexed.
Inference attacks \cite{blackstone_revisiting_2020,damie_highly_2021} only require an attacker's knowledge ``similar but different'' from the indexed data (as formulated in \cite{damie_highly_2021}); i.e., cover a more general setting.
The first SSE attacks \cite{blackstone_revisiting_2020,cash_leakage-abuse_2015,islam_access_2012,lambregts_volume_2022,ning_passive_2018,ning_leap_2021leap,poddar_practical_2020} were only usable (or only accurate) as known-data attacks.
Recently, Damie et al. \cite{damie_highly_2021} and Oya and Kerschbaum \cite{oya_ihop_2022} respectively presented the \textit{refined score attack} and the \textit{IHOP attack}, which \textbf{both achieved high recovery rates as inference attacks}.

Other recent works improved existing attacks.
Gui et al. \cite{gui_rethinking_2023} introduced an attack built similarly to the historical attack of \cite{islam_access_2012}, and takes leakage mitigation into account to improve the attack results.
Ho et al. \cite{ho_similar_2024} optimized the refined score attack using an additional volume leakage.

Other types of attack exist, but they are out of scope: active attacks \cite{langhout_file-injection_2024,zhang_inject_2024,zhang_high_2023,zhang_all_2016}, attacks exploiting the query frequency \cite{liu_search_2014,oya_hiding_2021}, and attacks against other schemes \cite{falzon_full_2020,grubbs_pump_2018,kellaris_generic_2016,kornaropoulos_data_2019,kornaropoulos_state_2020,lacharite_improved_2018,markatou_attacks_2023,markatou_reconstructing_2021}.
Our work focuses on (passive) inference attacks against static SSE schemes with access and search pattern leakage.

\paragraph{The ``similar-data'' assumption}
Inference attacks require the attacker's data to be ``sufficiently similar'' to the indexed data (as formulated in \cite{damie_highly_2021}).
However, no existing work discussed whether this vague assumption was realistic.

Existing works \cite{damie_highly_2021,dittert_too_2023} highlighted the (evident) link between dataset similarity and attack accuracy.
While this observation is interesting, it does not answer a core question: \textbf{how likely is it for an attacker to obtain a ``sufficiently similar'' dataset?}
This question is the \emph{starting point of our paper}.

\paragraph{Related works}
Recent works in SSE aimed at assessing the practicality of the attacks.
On the one hand, Kamara et al. \cite{kamara_sok_2022} developed a Python framework to standardize the attack simulation commonly used in attack papers.
On the other hand, multiple works provided different perspective on the leakage of SSE schemes \cite{boldyreva_understanding_2024,kamara_bayesian_2023,kornaropoulos_leakage_2022}.
While these works analyzed the SSE leakage, we study another part of the attacker's knowledge: the attacker's dataset.

\paragraph{Our contributions}
Our work provides several statistical tools to better analyze the sensitivity of SSE attacks to data similarity.
In particular, we propose:
\begin{enumerate}
    \item A \emph{mathematical model for the attacker's knowledge} based on statistical estimators. This enables an analytical study of the similar-data assumption.
    \item \emph{A statistical method to analyze the accuracy of any attack based on a similarity metric}. We apply it to compare three existing attacks and study questions such as: is similarity the only factor influencing the attack accuracy.
    \item Based on our analysis of the similarity, we show that \emph{setting a maximum index size can limit the attack accuracy}. We also provide a statistical method to estimate a maximum size based on a specific attack and dataset.
    \item A method to simulate attacker's data issued from a data breach; a recurrent motivation for the existence of similar data. Applying it on the Apache dataset shows that data breaches do not always provide \emph{similar enough} data.
\end{enumerate}

\section{Preliminaries}
\label{sec:Preliminaries}

\subsection{Searchable symmetric encryption (SSE)}
From a high-level point of view, SSE schemes are based on the design described in \cite{curtmola_searchable_2006}.
The client owns a dataset $ D_\text{ind}$.
For each $d\in D_\text{ind}$, we denote $\text{id}(d)$ its identifier.
We assume that $\text{id}(\cdot)$ leaks no information about $d$.
The client generates an inverted index and encrypts it using her secret key $K$.
Then, this encrypted index is uploaded to the server.
A query token is an output of the query function (denoted as $\text{Query}_K(w)$) taking as input the keyword $w$ and the secret key $K$.
This index associates query tokens with the identifiers of the documents containing the underlying keyword.
To query the keyword $w$, the client computes $\text{Query}_K(w)$ and sends this token to the server.
The server sends back the matching encrypted documents.

This work focuses on efficient SSE constructions \cite{bost_sophos_2016,bost_forward_2017,cash_dynamic_2014,chase_structured_2010,curtmola_searchable_2006,ghareh_chamani_new_2018,sun_practical_2021,sun_practical_2018} that leak the search and access patterns. 
Some works \cite{blackstone_revisiting_2020,kamara_structured_2018} proposed the avoidance of access pattern leakage, but its instantiation is inefficient.
Hence, our paper only studies SSE schemes leaking search and access patterns.
This leakage profile has been the target of many passive attacks \cite{cash_leakage-abuse_2015,damie_highly_2021,dijkslag_passive_2022,gui_rethinking_2023,islam_access_2012,lambregts_volume_2022,oya_hiding_2021,oya_ihop_2022,pouliot_shadow_2016,xu_leakage-abuse_2023}.

\subsection{Threat model}
A passive attacker can observe and link each query to its response.
Thus, the attacker can create (Query, DocIDs) pairs for each query observed.
The access pattern leakage corresponds to the set of matching (encrypted) documents leaked by each query.
The search pattern leakage lets an attacker identify if a query has been issued twice.
Hence, each keyword $w$ has a unique query token $\text{Query}_K(w)$.

We study the case of an \textit{honest-but-curious} server storing the encrypted index and following the protocol while trying to recover the keywords being queried.
Such an attacker can record and analyze the protocol transcript.
Nearly all attack papers used this setting; only \cite{langhout_file-injection_2024,zhang_inject_2024,zhang_high_2023,zhang_all_2016} proposed active attacks.
Finally, the attacker can use leaked data as long as they is not actively involved in the leak (e.g., a leaked dataset published by hackers).

Almost all passive attacks targeted SSE schemes supporting single-keyword search in a static setting (i.e., no update operation).
On the one hand, only Dijkslag et al. \cite{dijkslag_passive_2022} studied attacks on (static) schemes supporting conjunctive keyword search.
On the other hand, only Xu et al. \cite{xu_leakage-abuse_2023} presented a passive attack in a dynamic setting.
Therefore, our paper focuses on static schemes with single-keyword search because we want to study the most common attack setting.

\subsection{Attacker's knowledge}
\label{subsec:adv_knowledge}

Let $\text{Coocc}( D, w_a, w_b)$ be the function returning the number of documents from $ D$ in which both $w_a$ and $w_b$ appear (i.e., the number of co-occurrences).

\paragraph{Keyword universe} The attacker knows $\mathcal{W} = \{w_1, \dots, w_m\}$, the set of queryable keywords.

\paragraph{Attacker's dataset} The attacker knows a dataset $ D_\text{atk} = \{d_1, \dots, d_{n_\text{atk}} \}$.
The documents are ``\textit{similar but different}'' from the indexed documents.
Similar documents can be, for example, documents that have been leaked (then removed from the index).
From $ D_\text{atk}$, the attacker obtains the $m \times m$ keyword co-occurrence matrix defined as follows: $C^\text{atk}_{ij} = \text{Coocc}( D_\text{atk}, w_i, w_j)$.

\paragraph{Observed queries} The attacker has observed $l$ unique queries $\mathcal{Q} = \{q_1, \dots, q_l \}$ and their results.
We denote as $R(q) = \{id(d) | d \in  D_\text{ind} \wedge q = \text{Query}_K (w) \wedge w \in d\}$ the set of document identifiers returned for the query $q$.
The attacker can compute the $l \times l$ query co-occurrence matrix $C^\text{query}_{ij} =  \left|R(q_i) \cap R(q_j)\right|$.

\paragraph{Known queries} In \cite{cash_leakage-abuse_2015,damie_highly_2021,islam_access_2012}, the attacker has some \textit{known queries}: for $k$ different queries, the attacker knows the underlying keyword of the query.

\paragraph{Number of indexed documents} The attacker knows the number of documents indexed $n_\text{ind}$.
An honest-but-curious server storing the index can infer this.

\subsection{Attacks}
The attack algorithm \emph{takes as input} the attacker's knowledge presented in Section \ref{subsec:adv_knowledge}.
The algorithm \emph{outputs} a predicted keyword for each observed query.
The attack accuracy is the proportion of correctly predicted keywords.

\paragraph{Refined score attack} This attack \cite{damie_highly_2021} uses the co-occurrence matrices and the known queries to extract a vector characterizing each keyword and each query.
Each keyword vector is characterized by its co-occurrence with the keywords from the known queries.
Each query vector is characterized by its co-occurrence with the queries from the known queries.
The score attack identifies, for each query, the keyword minimizing the keyword-query distance (based on their respective vectors).
The refined score attack improves this naive idea via an iterative process to improve the query recovery.

\paragraph{IHOP attack} \cite{oya_ihop_2022} also uses the co-occurrence matrices but does not require known queries contrary to \cite{damie_highly_2021}.
This paper formulates the attack problem as an assignment problem (i.e., assigning queries to keywords) and solve it using an efficient iterative algorithm based on linear assignment.
While the refined score attack uses a vector distance to quantify the cost of an assignment, IHOP relies on log-likelihood functions.

\paragraph{Empirical conclusions}
Section \ref{sec:atk_analysis_comp} and \ref{sec:dataset_size_acc} present robust statistical methods to obtain empirical conclusions (e.g., parameters influencing the attack accuracy).
We demonstrate our methods on state-of-the-art attacks (i.e., the Score, Refined Score, and IHOP attacks), but these methods are usable on any future passive query-recovery attacks.
Thus, our empirical conclusions hold for these attacks specifically, and our methods should be applied on any future attack to confirm the conclusions on them.

\emph{Empirical conclusions are inherent to statistical approaches}: since statistics draw conclusions based on experimental data, the conclusions hold for any data drawn from the same distribution... any new data distribution (i.e., new attack in our case) requires to reproduce the analysis.
Conclusions conditioned to a data distribution is then common in statistics, and they have a clear motivation: they enable \emph{studying phenomena too complex to be analyzed analytically}.

Our main goal is to provide novel statistical tools to analyze and understand attacks against SSE.
Our empirical conclusions demonstrate the potential of our statistical tools and put a new light on the existing literature.
Nevertheless, note that conclusions from Section \ref{sec:noisy-knowledge} hold for all attacks because they rely on theoretical analysis.

\subsection{Experimental setup}
We perform our experiments on Debian Bullseye with a 4-core processor and 8 GB of memory.
We use three datasets: the commonly used Enron email dataset \cite{klimt2004introducing} (30,109 emails contained from \textit{\_sent\_mail} folders), the Apache mailing dataset \footnote{\url{http://mail-archives.apache.org/mod_mbox/lucene-java-user/}} (the 50,878 emails from the ``java-user'' mailing list from the Lucene project for 2002-2011), and the Blog Authorship dataset \footnote{Dataset archive: \url{https://web.archive.org/web/20200121222642/http://u.cs.biu.ac.il/~koppel/blogs/blogs.zip}} \cite{schler2006effects} (681,288 blog posts written by 19,320 authors), never used in an existing attack paper. Due to hardware constraints, we only use 200K posts simultaneously (picked uniformly at random) from this last dataset.

Since we are studying inference attacks, the dataset is split into two disjoint subsets of varying sizes to create $ D_\text{atk}$ and $ D_\text{ind}$.
By default, we perform this split uniformly at random, but Section \ref{sec:dataset_generation} explores an alternative splitting method.

We only extract keywords from the document content.
The keyword list is then obtained after stemming the words using the Porter stemmer \cite{porter1980algorithm} and removed the stop words (e.g., ``the'' or ``and'').
In the Apache dataset, we remove the mailing list signature from each email.
To generate the keyword universe $\mathcal{W}$, we choose the $m$ most frequent keywords in the complete dataset.

Like in \cite{damie_highly_2021,oya_ihop_2022}, the attack accuracy is the proportion of correct keyword recovery among a set of \textit{unique} observed queries.

Our Python codebase is publicly available here: \url{https://github.com/MarcT0K/Statistical-Leakage-SSE-attacks}.
We refactored the source code of the attacks published by \cite{damie_highly_2021,oya_ihop_2022}\footnote{Source code: \url{https://github.com/MarcT0K/Refined-score-atk-SSE}, \url{https://github.com/simon-oya/ihop-code}}.

\section{Revisiting the notion of similar data}
\label{sec:noisy-knowledge}

\paragraph{Essential notations}
Let $D_\text{atk}$ (resp. $D_\text{ind}$) be the attacker's dataset (resp. indexed dataset) and $n_{\text{atk}}$ (resp. $n_{\text{ind}}$), its size.
The index can be queried on any word $w$ contained in the keyword universe $\mathcal{W}$ (which contains $m$ keywords).
The keyword co-occurrence matrix $C^\text{atk}$ (resp. $C^\text{ind}$) is extracted from $D_\text{atk}$ (resp. $D_\text{ind}$).
The matrices are built based on the keyword universe $\mathcal{W}$, so they are of dimension $m\times m$.
Let $y \xleftarrow{R} \mathcal{Y}$ denote the sampling of $y$ from the
probability distribution $\mathcal{Y}$, and $p^X_{ij}$ denotes the following probability $\mathbb{P}(\text{keywords } (i,j) \text{ appear in } d \xleftarrow{R}  D_{X})$.

\subsection{Dataset similarity}
\label{subsec:sim_def}
Damie et al. \cite{damie_highly_2021} introduced the notion of dataset similarity to quantify the divergence between the attacker's dataset and the indexed dataset.

Let $C^\text{atk}$ (resp. $C^\text{ind}$) be an $m \times m$ co-occurrence matrix such that: $C^\text{atk}_{ij} = \text{Coocc}( D_\text{atk}, w_i, w_j)$ (resp. $C^\text{ind}_{ij} = \text{Coocc}( D_\text{ind}, w_i, w_j)$) for all keywords $w_i,w_j \in \mathcal{W}$.
The query co-occurrence matrix $C^\text{query}$ is a restriction of $C^\text{ind}$ with an unknown permutation (NB: the attacks aim to find the permutation): if $q_a=\text{Query}_K(w_i)$ and $q_b=\text{Query}_K(w_j)$ then $C^\text{query}_{ab} = C^\text{ind}_{ij}$.
Damie et al. \cite{damie_highly_2021} defined the $m \times m$ similarity matrix of $ D_\text{ind}$ and $ D_\text{atk}$ over the keyword universe $\mathcal{W}$:

\begin{equation}
    \label{eq:sim_mat_def}
    \text{SimMat} = \frac{C^\text{ind}}{n_\text{ind}} - \frac{C^\text{atk}}{n_\text{atk}}
\end{equation}

Using this matrix, they proposed a similarity metric $\epsilon$:
\begin{definition}[$\epsilon$-similarity]
    \label{def:eps-sim}
    $ D_\text{ind}$ and $ D_\text{atk}$ are \textit{$\epsilon$-similar} if\hspace{1em} $||\text{SimMat}|| \leq \epsilon$
\end{definition}

Definition \ref{def:eps-sim} uses the Frobenius norm (i.e., the equivalent of Euclidean norm for matrices) as matrix norm.
In the rest of the paper, we refer to $\epsilon$ as the smallest value that satisfies this inequality (i.e., $\epsilon = \mynorm{\text{SimMat}}$).

\paragraph{Alternative metrics}
Gui et al. \cite{gui_rethinking_2023} introduced two other similarity metrics: ``absolute distance'' and ``Modified Probability Score''.
The absolute distance has the same formula as the $\epsilon$-similarity but uses the infinity norm instead of the Frobenius norm.
The Modified Probability Score relies on conditional probabilities.

\paragraph{Metric choice}
There is no universally better metric; the choice of metric is subjective.
Our paper uses the $\epsilon$-similarity because its formula is convenient for mathematical analysis.
In other contexts, the Modified Probability Score might be preferable: e.g., if the link between the metric and conditional probability is necessary for mathematical analysis.

\paragraph{Dataset similarity assumption}
All inference attacks assumes that there exists a sufficiently small $\epsilon$ such that $D_\text{ind}$ and $D_\text{atk}$ are $\epsilon$-similar.
If we assume no similarity, the attacker could use random data.
Our work will explore what is a ``sufficiently small'' $\epsilon$, and whether such similarity can be obtained in practice.

\subsection{The attacker's knowledge is inherently noisy}
\label{subsec:coocc}
To analyze the similarity assumption, we first need a proper mathematical model for the co-occurrence matrices.
As in machine learning (ML), we consider a dataset as a sample of a random distribution.
The properties of the underlying probability distribution could bring more or less uncertainty depending on the use case.
We will model this uncertainty and show its impact on the similarity assumption.

\paragraph{Mathematical model}
Let $ D_{\text{atk}}$ and $ D_{\text{ind}}$  be two datasets composed of documents that are represented as binary vectors of length $m$.
Each vector component $i$ indicates whether the keyword $w_i$ is contained in the document.
Furthermore, let $\mathcal{X}_\text{atk}$ denote the random variable that describes the experiment of sampling a document (i.e., a  binary vector of length $m$) from the same probability distribution as given by the documents in $D_{\text{atk}}$.
In other words, $\mathcal{X}_\text{atk}$ is a vector of dependent Bernoulli random variables.
It implies that the co-occurrence matrix $C^\text{atk}$ is composed of realizations of dependent Binomial variables, where $C^\text{atk}_{ij} \xleftarrow{R} \mathcal{C}^{n_\text{atk}, p_\text{atk}}_{ij} = \mathcal{B}(n_\text{atk}, p_{ij}^\text{atk})$.
The probability $p^\text{atk}_{ij}$ corresponds to the probability that the keywords $i$ and $j$ both appear in a document from $ D_\text{atk}$.

Acknowledging the dependence between the Binomial variables is essential because it significantly complicates the mathematical analysis.
To be convinced of the dependence, let us consider the probabilities $p_{ij}$ and $p_{ii}$ (i.e., the probability of keyword $i$ appearing in a document).
We have $p_{ij} \le p_{ii}$ because the event ``keyword $i$ appears in a document'' contains the event ``keywords $i$ and $j$ appears in the same document''.

Let $\mathcal{X}_\text{ind}$ be the random variable for dataset $ D_{\text{ind}}$ defined with the same procedure as $\mathcal{X}_\text{atk}$ for dataset $ D_{\text{atk}}$.
Similarly, we deduce that the co-occurrence matrix $C^\text{ind}$ is composed of realizations of dependent Binomial variables, where $C^\text{ind}_{ij} \xleftarrow{R} \mathcal{C}^{n_\text{ind}, p_\text{ind}}_{ij}=\mathcal{B}(n_\text{ind}, p_{ij}^\text{ind})$.

To sum up, $\mathcal{C}^{n_\text{ind}, p_\text{ind}}$ and $\mathcal{C}^{n_\text{atk}, p_\text{atk}}$ are two random matrix probability distributions from which $C^\text{ind}$ and $C^\text{atk}$ are drawn.
We assume the two distributions to be independent because, in inference attacks, the attacker's dataset is considered ``different'': implicitly, it means obtained from independent sources.

\paragraph{Statistical estimators}
The previously defined random distribution ($\mathcal{C}^{n_\text{ind}, p_\text{ind}}$ and $\mathcal{C}^{n_\text{atk}, p_\text{atk}}$) \textbf{cannot be known exactly}.
We are in a classic statistics problem: we do not have access to the exact probabilities, but we know a dataset drawn from an \textit{unknown} probability distribution.
We can use the dataset to compute an \textit{estimation} of an \textit{unknown} probability.
The interest of statistical estimation is then to approximate these unknown probabilities.

Finding a similar dataset is close to the problem of representative sampling for surveying: the results are not the probabilities themselves (e.g., political opinions) but an estimation of this probability.
The larger the sample is, the more precise the estimation.
If one compares two samples obtained from populations with different probabilities, their respective estimators may be close to each other, but it is unlikely.
For example, a survey on professors might provide similar results to the same survey on students.
However, one understands that it is unlikely, especially with numerous questions (i.e., high-dimensional data).

Analogously, our attacker only observes samples from a random distribution (i.e., documents) but not the probabilities themselves.
They can estimate the co-probabilities\footnote{We use the term co-probability to refer to the probability of a keyword co-occurrence.} $p_{ij}^\text{atk}$ (resp. $p_{ij}^\text{ind}$) by computing the co-frequencies $\frac{C^\text{atk}_{ij}}{n_\text{atk}} = \widehat{p}_{ij}^\text{atk}$ (resp. $\frac{C^\text{ind}_{ij}}{n_\text{ind}}= \widehat{p}_{ij}^\text{ind}$)\footnote{The notation $\widehat{x}$ refers to the estimation of the unknown value $x$.}.
The co-frequency is the \textit{maximum likelihood estimator} of $p_{ij}$.
The maximum likelihood estimator is an efficient estimator\footnote{In statistics, the efficiency \cite{everitt_cambridge_2010} measures the quality of an estimator. Here, the maximum likelihood estimator reaches the Cramér-Rao lower bound \cite{cramer_mathematical_1999,rao_information_1945} which bounds the variance of an unbiased estimator.} for $p_{ij}$ and converges toward this (unknown) value.
The larger $ D_\text{atk}$ (resp. $ D_\text{ind}$) is, the more precise the estimation will be.

Using the statistical estimators, \textbf{we can reformulate the ``similar-data'' assumption}: an attack is successful if the statistical estimation of $p_{ij}^\text{ind}$ and $p_{ij}^\text{atk}$ are close enough (for all $i,j\in\{1\dots m\}$).
While this sounds nearly the same as the initial formulation, it opens new perspectives: the SSE attack problem boils down to a statistical estimation problem.
We can then reuse results and methods from statistics literature to understand the shortcomings of SSE attacks.

\subsection{The impact of dataset sizes on the similarity}
\label{subsec:size_similarity}
A key property of statistical estimators is their convergence when the sample size tends to infinity.
The sample size has then a direct impact on the estimation quality, and on the underlying similarity assumption.
In our case, the sample size corresponds to the number of documents in the dataset.
This subsection investigates \emph{analytically and experimentally} how the dataset sizes influence the dataset similarity (using the $\epsilon$-similarity).

\textbf{Equality of co-probabilities}
First, we make the following assumption: $\forall i,j \in [m], p_{ij}^\text{ind} = p_{ij}^\text{atk}$.
In other words, for all pairs of keywords $w_i$ and $w_j$ from $\mathcal{W}$, the probability of their joint appearance in $ D_\text{atk}$ is equal to the probability of the same event in $ D_\text{ind}$\footnote{The equality of co-probabilities does not imply that the distributions are equal (i.e., $\mathcal{X}^\text{ind} \sim \mathcal{X}^\text{atk}$) because it only fixes $\frac{m(m+1)}{2}$ parameters out of their $2^{m-1}$ parameters.}.

As shown in Appendix \ref{app:uniform_split}, this assumption is the most advantageous setup for an attacker because it induces a smaller $\epsilon$-similarity.
Our conclusions then hold for the most advantaged attacker, \emph{so they hold for any attacker.}

\textbf{Analysis of the $\epsilon$-similarity}
Let us start from the co-probability estimators (i.e., $\widehat{p}_{ij}^\text{ind} = \frac{C_{ij}^\text{ind}}{n_\text{ind}}$ and $\widehat{p}_{ij}^\text{atk} = \frac{C_{ij}^\text{atk}}{n_\text{atk}}$) which are central in the attacks.
Since they are maximum likelihood estimators, we have asymptotic normality of estimators (\cite{griliches1983handbook} Chapter 36, Theorem 3.3).
Considering $n_{atk}$ and $n_{ind}$ are sufficiently large, we can approximate the distribution at finite distance by the normal distribution.
Let $Z^\text{atk}_{ij} \sim \mathcal{N}(0, \sigma^2_{\text{atk},ij})$ and $Z^\text{ind}_{ij} \sim \mathcal{N}(0, \sigma^2_{\text{ind},ij})$ be two Gaussian distributions.
Under the equality of co-probabilities, we have:

\begin{align}
    \widehat{p}^\text{ind}_{ij} & = p_{ij} + \frac{1}{\sqrt{n_\text{ind}}}Z^\text{ind}_{ij}, \widehat{p}^\text{atk}_{ij} = p_{ij} + \frac{1}{\sqrt{n_\text{atk}}}Z^\text{atk}_  {ij}
\end{align}

It implies that $(\widehat{p}^\text{ind}_{ij} - \widehat{p}^\text{atk}_{ij}) = \sqrt{\frac{1}{n_\text{ind}} +  \frac{1}{n_\text{atk}}}Z_{ij},  Z_{ij} \sim \mathcal{N}(0, \sigma_{ij}^2)$.
We can reuse this result in the expression of $\epsilon$:
\begin{equation}\label{eq:doc_size_inf}
    \begin{split}
        \epsilon          & = \mynorm{\text{SimMat}} = \sqrt{\sum_{i,j} (\frac{C^\text{ind}_{ij}}{n_\text{ind}} - \frac{C^\text{atk}_{ij}}{n_\text{atk}})^2}   = \sqrt{\sum_{i,j} (\widehat{p}^\text{ind}_{ij} - \widehat{p}^\text{atk}_{ij})^2} \\
        \Rightarrow \epsilon & = \sqrt{\sum_{i,j}(\frac{1}{n_\text{ind}} +  \frac{1}{n_\text{atk}})(Z_{ij})^2} = \sqrt{(\frac{1}{n_\text{ind}} +  \frac{1}{n_\text{atk}})K}, K\text{ a random variable}
    \end{split}
\end{equation}

We can draw two conclusions from Equation \eqref{eq:doc_size_inf}.
First, the size of the attacker's dataset matters as much as the size of the indexed dataset.
It seems evident that the more documents the attacker knows, the better the attack is.
However, thinking that a bigger indexed dataset helps the attacker could seem counterintuitive.
To understand this intuition, we must remember that the attacks rely on comparing two estimators (i.e., the co-occurrence matrices).
The attack is unsuccessful if \emph{any of them} is poorly estimated.

Second, fixing a dataset size creates a threshold for similarity.
For example, let us fix $n_\text{ind}$ and observe the effect on Equation \eqref{eq:doc_size_inf}: we have $\lim_{n_\text{atk} \rightarrow\infty} \epsilon =  \sqrt{\frac{1}{n_\text{ind}}K}$.
Hence, even with an infinite-sized attacker's dataset, having an $\epsilon$-similarity below a certain threshold (proportional to $\sqrt{\frac{1}{n_\text{ind}}}$) is unlikely.

\begin{figure}
    \centering
    \begin{subfigure}{0.45\textwidth}
        \centering
        \includegraphics[width=\linewidth]{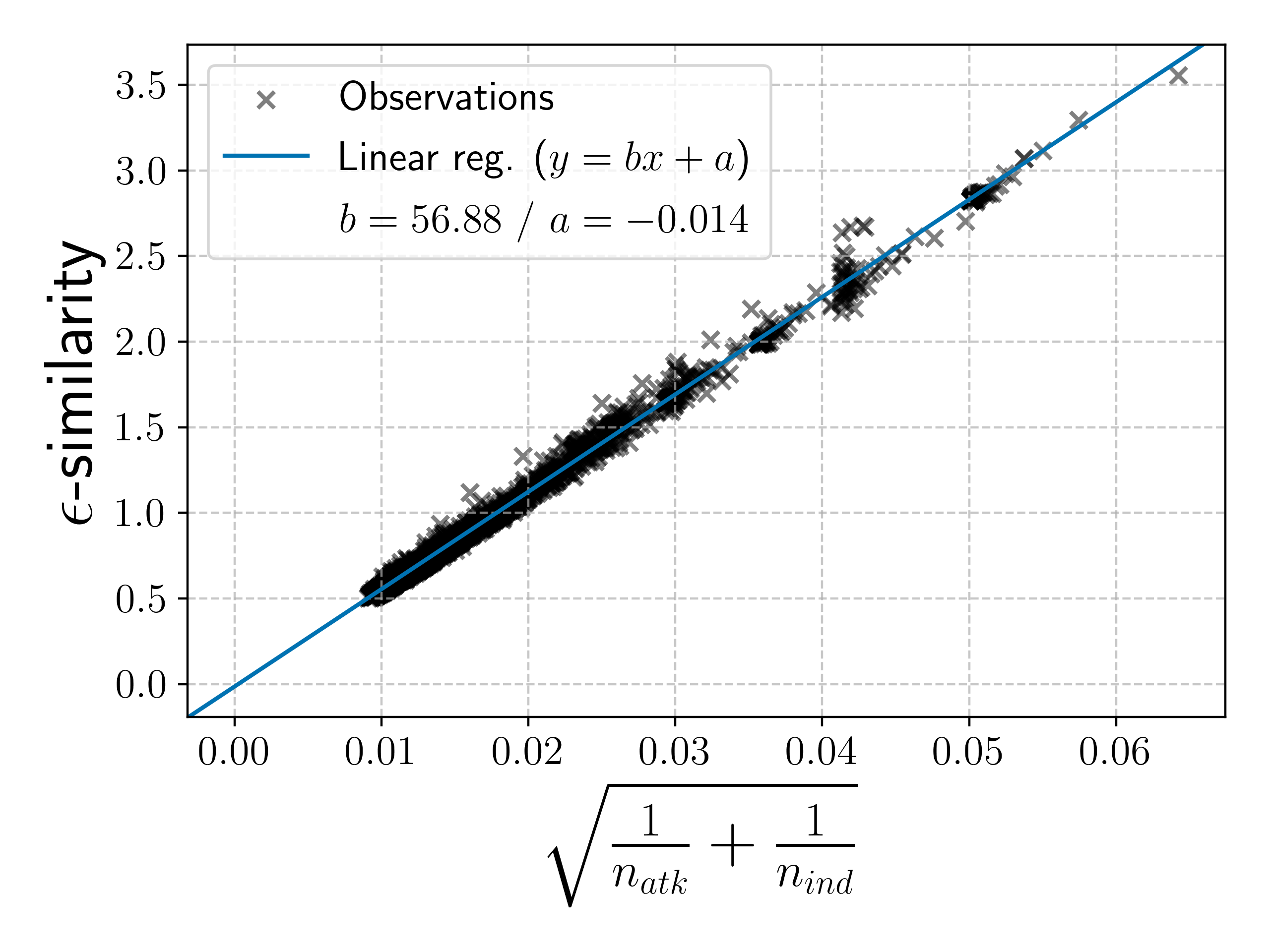}
        \caption{Varying $n_\text{atk}$ and $n_\text{ind}$}
        \label{fig:size_influence}
    \end{subfigure}
    \begin{subfigure}{0.45\textwidth}
        \centering
        \includegraphics[width=\linewidth]{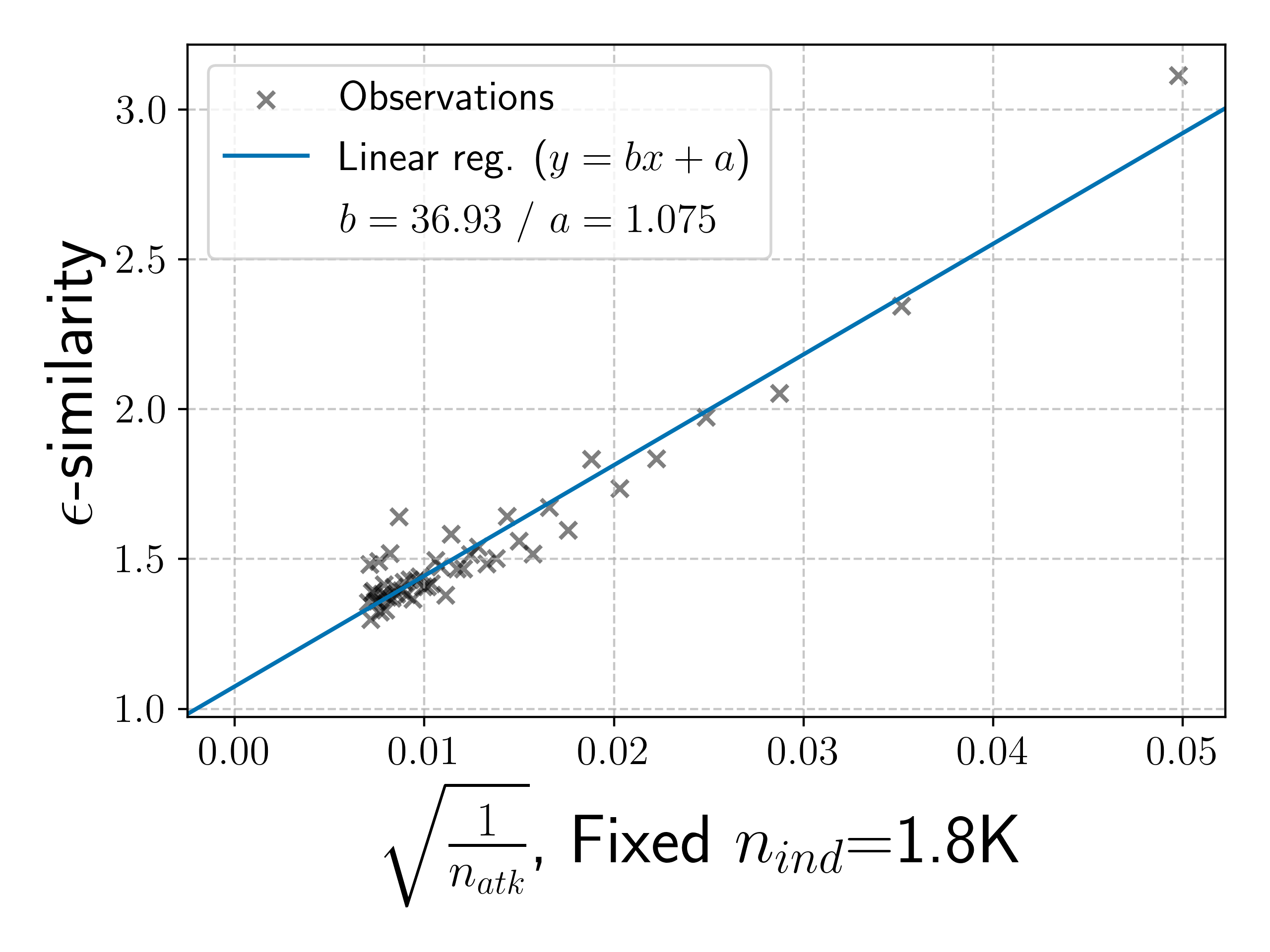}
        \caption{Varying $n_\text{atk}$, Fixed $n_\text{ind}$}
        \label{fig:size_influence_fixed}
    \end{subfigure}
    \caption{Influence of the dataset size on the similarity (Dataset: Apache)}
\end{figure}

\paragraph{Experimental confirmation}
Figure \ref{fig:size_influence} illustrates this theoretical result.
To generate results over a wide range of similarities, we vary the attacker's and indexed dataset sizes vary between 5\% and 95\% of the Apache dataset.
This plot comprises 2500 points with varying $n_\text{ind}$ and $n_\text{atk}$.
On the one hand, the linear relationship between $\sqrt{\frac{1}{n_\text{ind}} +  \frac{1}{n_\text{atk}}}$ and $\epsilon$ confirms the symmetric influence of these variables.
On the other hand, Figure \ref{fig:size_influence_fixed} shows the similarity threshold created when $n_\text{ind}$ is fixed.
Indeed, the linear relationship in Figure \ref{fig:size_influence} has an intercept tending to zero, while the linear relationship in Figure \ref{fig:size_influence_fixed} has an intercept equal to 1.
This non-zero intercept highlights the existence of a similarity threshold since, even with an infinite-sized attacker's dataset, the average $\epsilon$-similarity would be 1.
This threshold is \emph{specific to the  Apache dataset}; each has a different threshold influenced by the data distribution.

\section{Attack analysis from a data-similarity point of view}
\label{sec:atk_analysis_comp}

Existing works \cite{damie_highly_2021,dittert_too_2023} highlighted an evident link between similarity and attack accuracy: the more similar the (indexed and attacker) datasets are, the more accurate the attacks are.
This section goes beyond this simple intuition to fully explore the role of similarity in attack success.
We propose a robust statistical method to analyze the accuracy (of a given attack) using the similarity metric.
We then apply this method on three existing and three different datasets.

On the one hand, we use this method to understand whether similarity is the only parameter influencing the accuracy of an attack.
On the other hand, we use the method to build a more consistent attack comparison methodology.

\subsection{How to analyze attack success using a similarity metric?}
\label{subsec:sim_acc}
We want to estimate a function $\widehat{f}_\text{Acc}$ modelling the average attack accuracy in function of the $\epsilon$-similarity; i.e., $\widehat{f}_\text{Acc}(\epsilon) = \mathbb{E}(\text{Acc})$\footnote{$\mathbb{E}\left[X\right]$ is the expected value of the random variable $X$.}.
A continuous function provides a more detailed understanding of the attack's strengths and weaknesses.
Such functions enable extrapolating the experimental results and precisely identify gaps in the literature.
Attack papers usually represent the accuracy on a finite set of points, which gives a poor understanding of the complete attack behavior.
Sections \ref{subsec:sim-role} and \ref{subsec:comp} will show how we can use these functions to provide novel insights about the existing attacks.

\begin{figure*}[t]
    \centering
    \begin{subfigure}{0.45\textwidth}
        \centering
        \includegraphics[width=\linewidth]{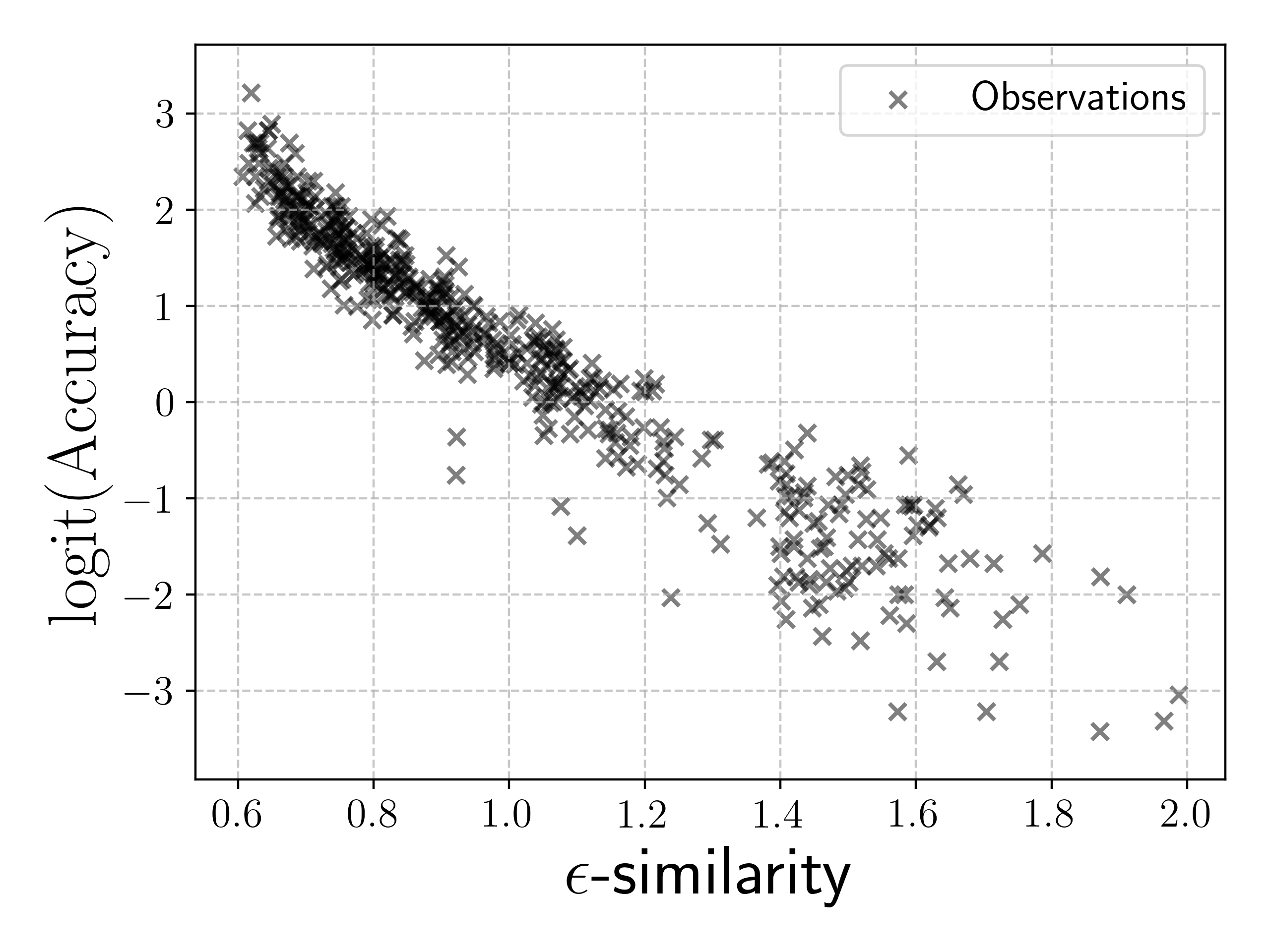}
        \caption{With $\logit$ transformation}
        \label{fig:atk_analysis_logit_enron}
    \end{subfigure}
    \begin{subfigure}{0.45\textwidth}
        \centering
        \includegraphics[width=\linewidth]{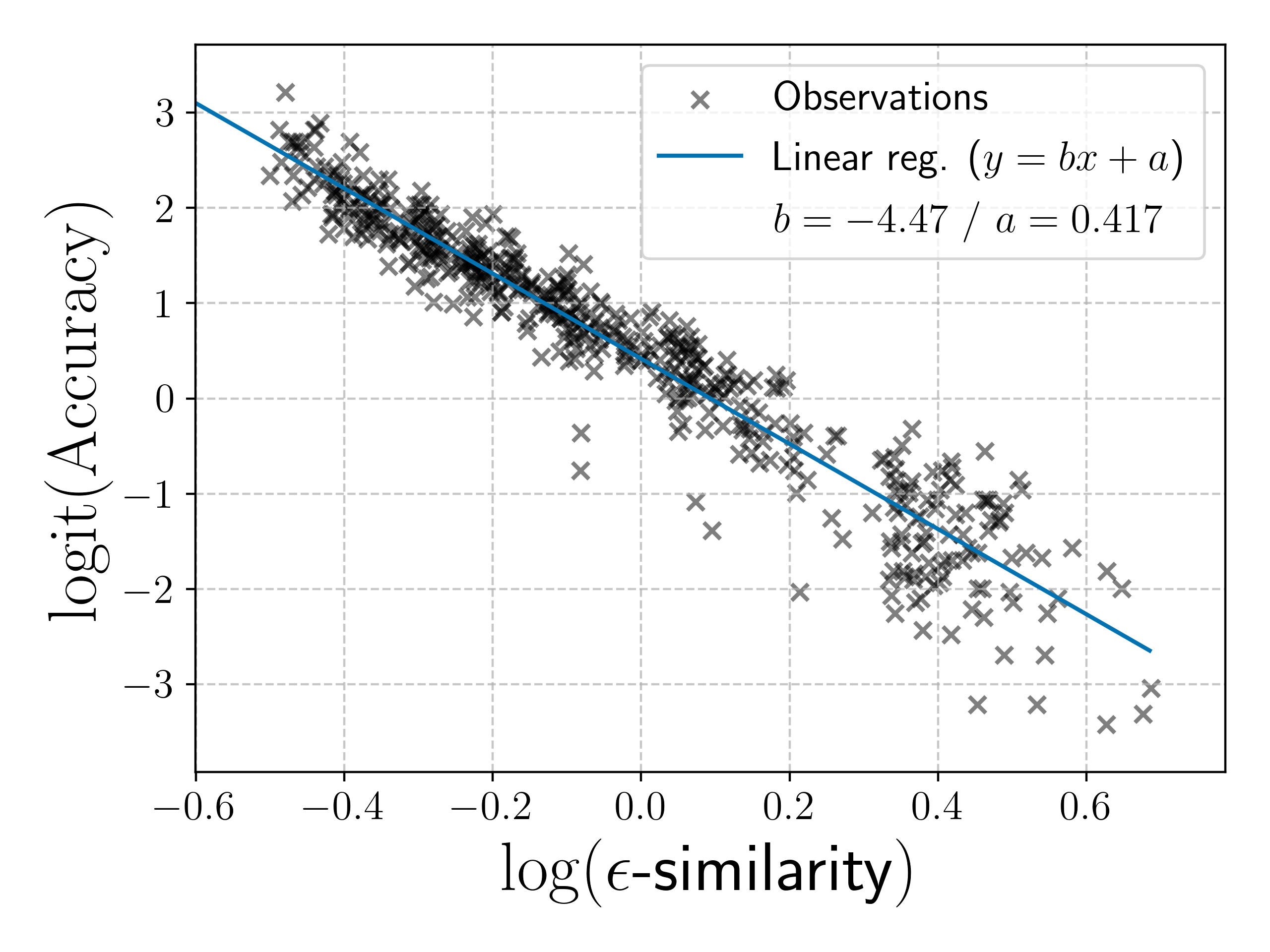}
        \caption{With $\logit-\log$ transformation}
        \label{fig:atk_analysis_logitlog_enron}
    \end{subfigure}
    \begin{subfigure}{0.45\textwidth}
        \centering
        \includegraphics[width=\linewidth]{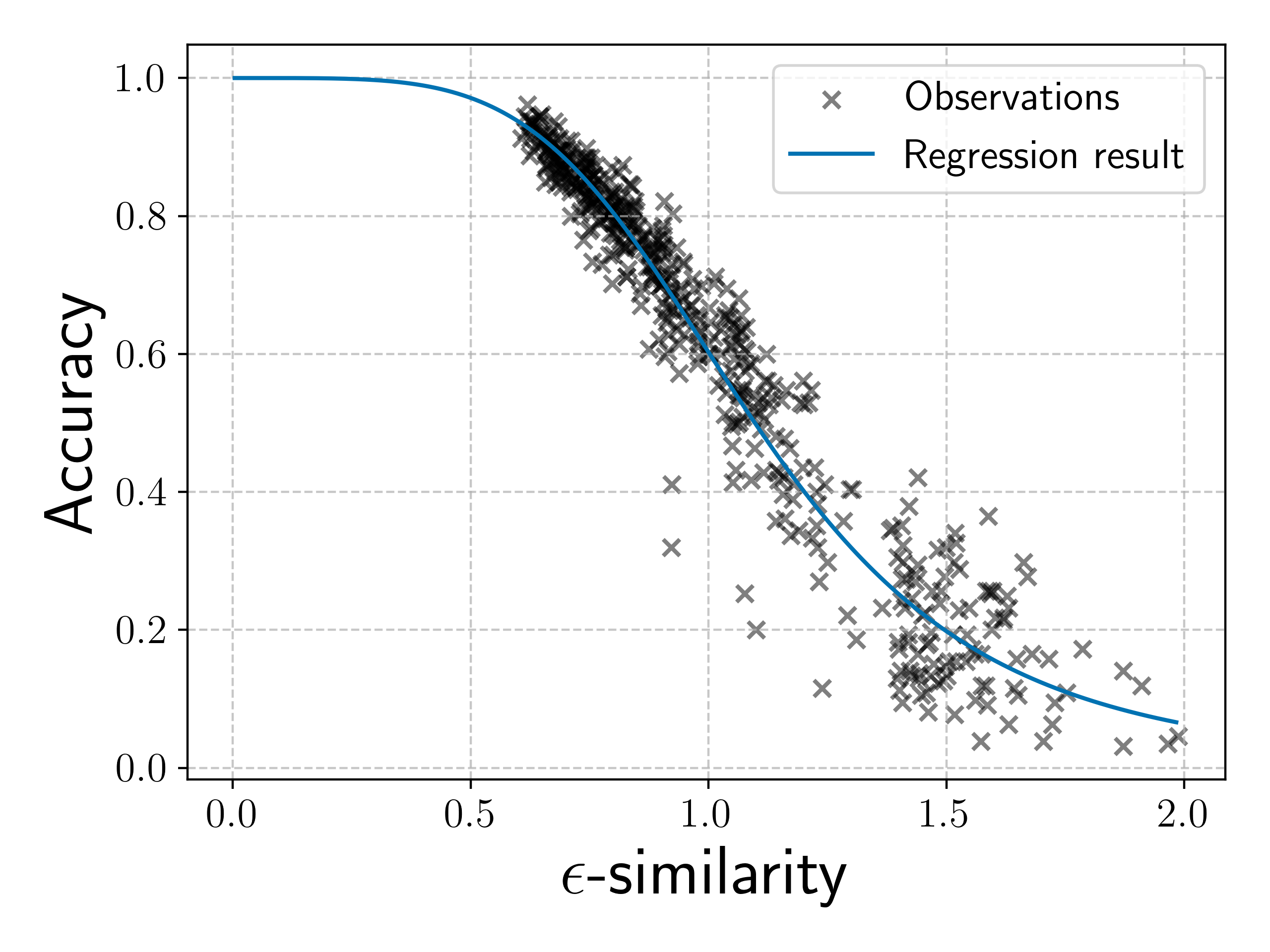}
        \caption{In the initial space}
        \label{fig:enron_atk_acc}
    \end{subfigure}
    \caption{Analysis of the refined score attack on Enron dataset ($m=1K$)}
    \label{fig:analysis_enron}
\end{figure*}

\paragraph{Estimating the accuracy function}
Before estimating the attack accuracy function, we need attack results.
Following the results from Section \ref{subsec:size_similarity}, we vary the size of the datasets to obtain attacks results with varying $\epsilon$-similarity.

Estimating a function based on this experimental data corresponds to a regression problem.
First, we can point out that the function cannot be linear since the accuracy is in $[0,1]$.
Hence, we cannot directly use linear regression on the raw data.
To circumvent this difficulty, we estimate a function $\widehat{f}_\text{Acc}'(\epsilon) = \logit(\text{Acc})$.
The $\logit$ function is defined as $\logit(p) = \log(\frac{p}{1-p})$ with $p\in\left(0,1\right)$ and has a known inverse function $\expit$.
This $\logit$ function maps a $\left(0,1\right)$ space into the real number space.
This logarithmic transformation is common in statistics.

Figure \ref{fig:atk_analysis_logit_enron} represents the simulation results with the $\logit$ transformation.
We observe two problems in this figure: we do not have apparent linearity between the variables, and there is a ``heteroscedasticity of the noises''.
The first problem is simple: the distribution of the points seems flattened when $\epsilon$ grows.
The second issue concerns the assumption made when computing a linear regression.
A fundamental assumption in linear regression is that the noise distribution is the same at each point of the space.
Then, the data should have the same noise distribution for small and high $\epsilon$, which is not the case in Figure \ref{fig:atk_analysis_logit_enron}.
A recurrent solution in ML to the linearity problem is to apply a logarithm transformation on the $x$ axis.
Figure \ref{fig:atk_analysis_logitlog_enron} presents the results of a linear regression between $\logit(\text{Accuracy})$ and $\log(\epsilon)$.
Combining these two logarithmic transformations makes the linear relationship more apparent.
Moreover, the logarithmic transformation slightly corrected the heteroscedasticity.
Scaling methods (where a data point is modified by a scaling factor depending on its $\epsilon$) exist to perfectly fix heteroscedasticity.
However, they add more complexity for equivalent results.

To sum up, we map the attack simulation results in a $\logit-\log$ space to compute a linear regression (i.e., $\logit(\mathbb{E}(\text{Acc})) = b\log(\epsilon) + a$).
We then deduce the average accuracy function $\widehat{f}_\text{Acc}(\epsilon) = \expit(b\cdot\log(\epsilon) + a)$ with $(b,a)$ the regression parameters.

Figure \ref{fig:analysis_enron} represents the resulting average accuracy function for the refined score attack (obtained on the Enron dataset).
This estimation protocol can be executed on any passive attack; e.g., Figure \ref{fig:comp_atk} compares the average accuracy function obtained on three different attacks.

\paragraph{Alternative regression methods}
Our regression model is built on classical statistical approaches (e.g., the $\logit$ function) and produces convincing results, but we cannot formally prove that it is the best model to represent the relationship between accuracy and similarity.
This situation is perfectly summarized by an aphorism attributed to the statistician George Box \cite{box_science_1976}: ``all models are wrong, but some are useful.''
In other words, it is commonly accepted that a statistical model cannot capture all the complexities of the real world.
However, this relative imperfection does not prevent from using statistical models.

The ML literature has described many other regression models, but our method is simple and provides interpretable results despite an initial non-linear problem.
A model is ``interpretable'' if an observer can understand the cause of a decision \cite{miller_explanation_2019} (e.g., predict the impact of an input variation).
This interpretability is absent in many popular ML models, such as neural networks and random forests.
These models can solve non-linear problems efficiently, but are often referred to as ``black boxes'' due to their lack of interpretability.
Our model is interpretable and efficient because it combines linear models with simple logarithmic transformations.
Interpretability is a crucial property in our case to foresee the security limits... making our model particularly useful.


\subsection{The role of similarity in attack success}
\label{subsec:sim-role}

We can now use our average accuracy functions to understand whether similarity is the only factor influencing the attack success.
We can reformulate it as follows: for each attack, is there a unique accuracy function $\widehat{f}_\text{Acc}(\epsilon)$ valid for all datasets?

\begin{figure}
    \centering
    \begin{subfigure}{0.45\textwidth}
        \centering
        \includegraphics[width=\linewidth]{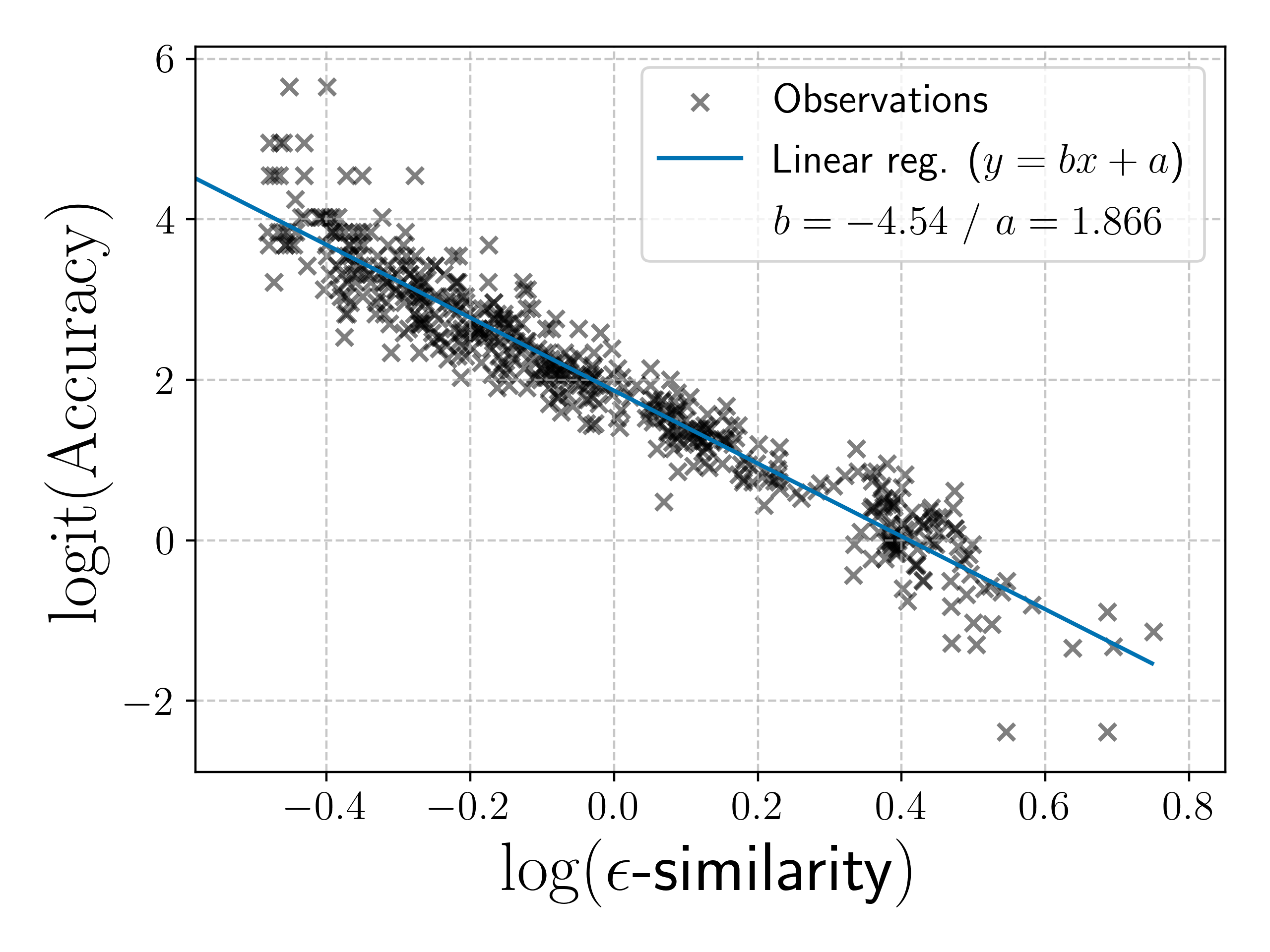}
        \caption{Apache dataset}
    \end{subfigure}
    \begin{subfigure}{0.45\textwidth}
        \centering
        \includegraphics[width=\linewidth]{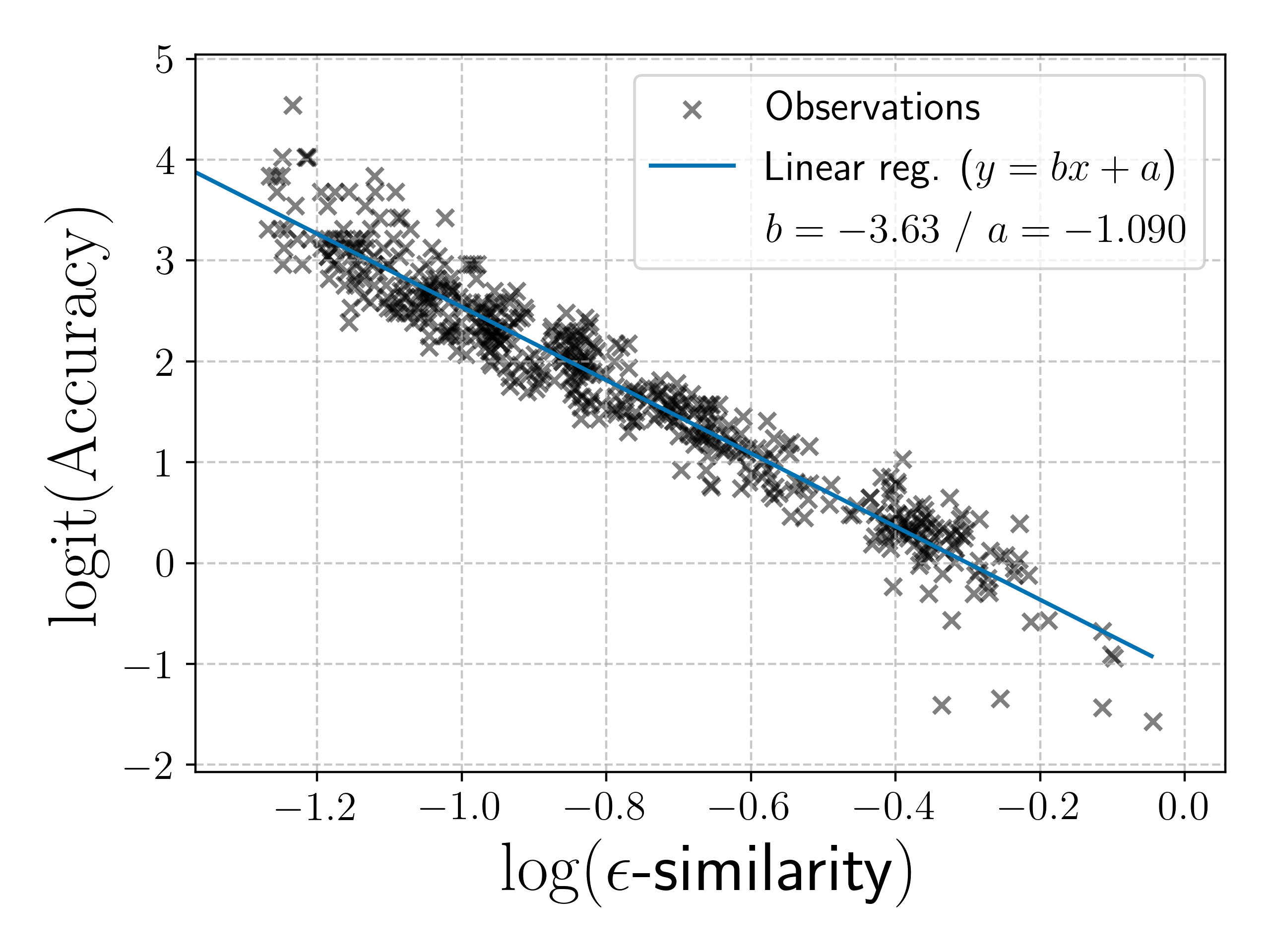}
        \caption{Blogs dataset}
    \end{subfigure}
    \caption{Regression results on others datasets ($m=1K$)}
    \label{fig:atk_analysis_apache_blogs}
\end{figure}

Figure \ref{fig:atk_analysis_apache_blogs} presents the average accuracy function on two other datasets: Apache and Blogs.
The linear regression parameters are $(-4.54, 1.87)$ for the Blogs dataset, $(-3.63, -1.09)$ for Apache, and $(-4.47, 0.417)$ for Enron.
These parameters significantly differ, so no unique function links the similarity to the accuracy.
Thus, \textbf{similarity is not the only parameter influencing the accuracy} of the refined score attack.

We can explain this phenomenon using keyword distribution.
Consider a hypothetical dataset with $m$ keywords: all keywords appear in all documents.
Even with perfect knowledge of the probabilities (i.e., $\epsilon=0$), we cannot distinguish the keywords with an accuracy better than a random guess (i.e., $\mathbb{E}(\text{Acc})=\frac{1}{m}$), because all keywords have the same co-frequencies.
Hence, each dataset has a keyword distribution, resulting in more or less distinguishable keywords.

\paragraph{Revisiting the notion of indistinguishability}
Indistinguishability is not a new notion in SSE \cite{bost_thwarting_2017,curtmola_searchable_2006,xu_interpreting_2021}.
Attack mitigations usually aim at some keyword/query indistinguishability.
The intuition is to have all queries leaking the same information, so the leakage induced by SSE schemes becomes useless to an attacker.

For example, mitigations based on false-positive results \cite{bost_thwarting_2017,cash_leakage-abuse_2015,xu_hardening_2019} (used in SSE implementations such as ShieldDB \cite{vo_shielddb_2023}) produces so-called ``indistinguishable'' queries by ``smoothing'' the keyword frequencies.
We argue that these countermeasures do not produce indistinguishability but simply noisy statistics (i.e., noisy co-occurrence estimators).
Indeed, the frequencies might be indistinguishable, but the co-frequencies do not automatically inherit this property.
To produce an indistinguishable leakage, we would need to smooth all the statistics, from the keyword frequencies to the co-frequencies of $m$ keywords (i.e., the $2^{m}-1$ parameters of the random binary vectors defined in Section \ref{subsec:coocc}).

However, the good performances of the mitigations show that noisy statistics are enough to prevent attacks.
We do not need indistinguishable leakages; we must only ensure they are realistically unusable.
For example, two queries can leak distinct co-frequency information; the attacks will be unsuccessful if this information leakage is too noisy.
Hence, it is unnecessary to perfectly protect the queries by enforcing them to have the same co-frequency leakage.
To reach a satisfying noise level, Section \ref{sec:dataset_size_acc} introduces a maximum index size estimation (guaranteeing a low-attack accuracy).


\subsection{Comparison framework}
\label{subsec:comp}

Finally, we can use the average accuracy functions to design a new attack comparison methodology: \emph{(1)} generate attack results for each attack, \emph{(2)} estimate the average accuracy functions, \emph{(3)} compare the average accuracy functions.

Figure \ref{fig:comp_atk} compares the accuracy functions of three recent attacks (on Apache dataset): Score, Refined score, and IHOP attacks.
The IHOP and refined score attacks outperform the score attack.
IHOP obtains slightly better results for smaller $\epsilon$ while for higher $\epsilon$, the refined score attack has a small advantage.

This comparison framework simplifies the identification of game-changing attacks.
The existing attack papers typically compare attacks over a set of parameters, emphasizing a clear accuracy difference between a novel attack and the state-of-the-art.
While this approach identifies improvements, it does not give a full picture of the situation.
In our case, a traditional attack comparison between IHOP and Refined Score attack would ``zoom in'' between $\epsilon=0.5$ and $\epsilon=1.0$ to focus on the highest accuracy gains.
This ``zoom'' ignores subtle phenomena, such as the slight advantage of the Refined Score attack on highly dissimilar datasets.
Systematically comparing accuracy functions would guarantee attack papers thoroughly analyze the attack behaviors and avoid focusing on a convenient set of parameters.
Moreover, the linear regression estimates an average distribution, smoothing the noise over the curve.
Noisy experimental results are a recurrent concern that questions whether an accuracy difference is significant.

Nevertheless, linear regression remains a statistical estimation, so some uncertainty remains.
We can model this uncertainty using confidence intervals.
As for individual experimental results, we can run simple statistical tests to prove a difference to be statistically significant.
While performing statistical tests on a large set of individual experimental results is tedious, performing a statistical test on linear regression parameters \cite{clogg_statistical_1995} is much easier.
In other words, we can statistically prove that an estimated function is significantly higher than another function instead of proving a significant difference on individual points.
Hence, this simplicity should \emph{encourage future attack papers to rigorously prove accuracy improvements using statistical tests on accuracy functions}.

\begin{figure}
    \centering
    \includegraphics[width=0.45\linewidth]{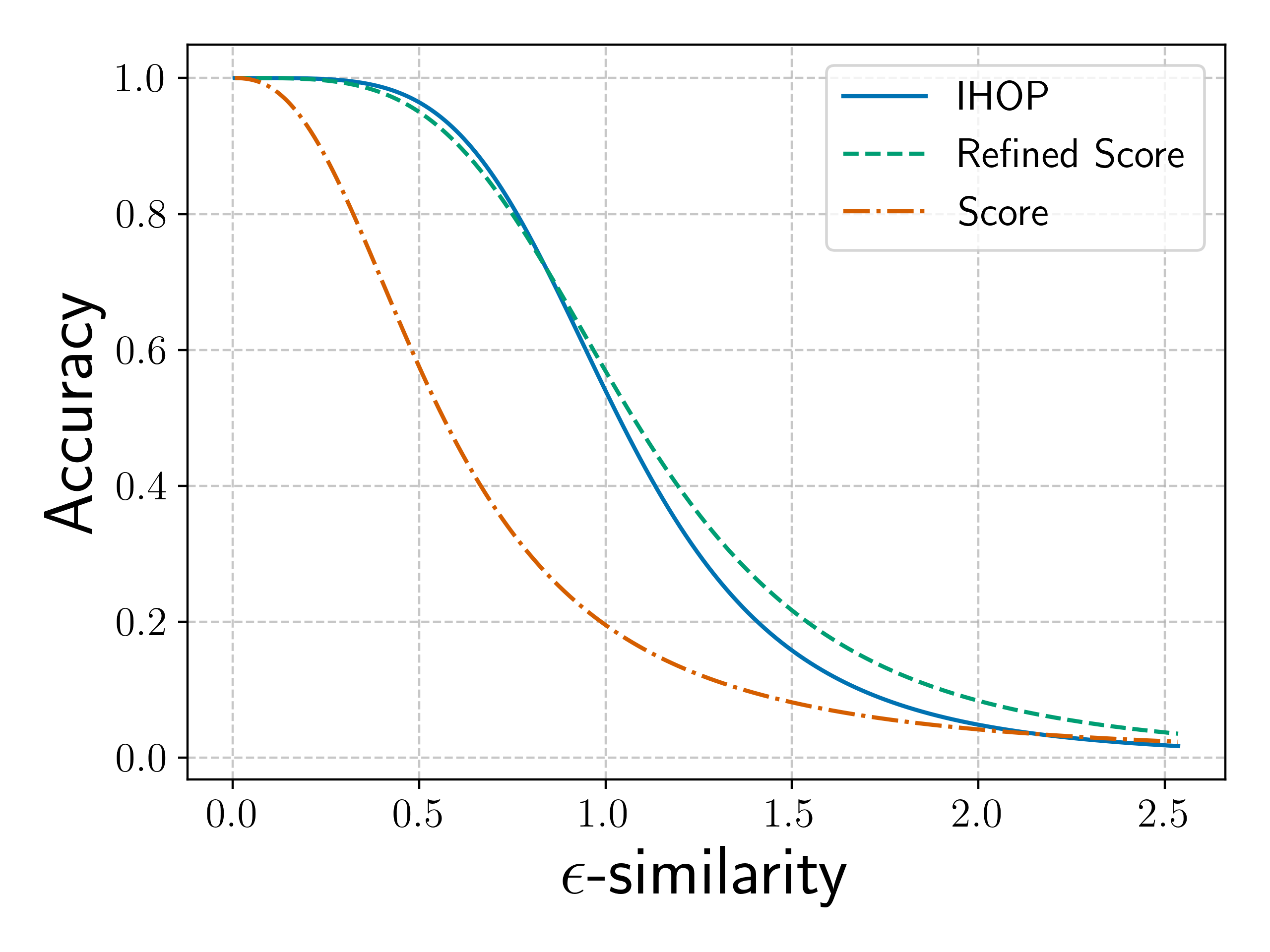}
    \caption{Comparison of the average accuracy functions of the Score, Refined Score, and IHOP attacks ($m=500$)}
    \label{fig:comp_atk}
\end{figure}

\paragraph{Attack mitigation efficiency}
Finally, the comparison framework can be extended to attack mitigation comparison.
The idea would be to compare accuracy functions obtained with and without countermeasures.
The impact of the mitigation on the accuracy function would quantify its efficiency. 

\section{Making the ``similar-data'' assumption unlikely}
\label{sec:dataset_size_acc}

Since attack papers require the attacker to have a ``sufficiently similar'' dataset, one may expect that only the attacker has an influence on the ``similar-data'' assumption.
Indeed, this formulation may imply that the assumption only depends on the quality of the dataset found by the attacker.
However, this intuition is untrue: some settings in the SSE deployment can make the ``similar-data'' assumption unlikely.
In particular, this section shows how to set a maximum index size such that it is highly unlikely to find a ``sufficiently similar'' dataset.

\subsection{Dataset size, an essential condition to attack success}
\label{subsec:bounded_acc}
Section \ref{subsec:size_similarity} have explored the influences of dataset sizes on the similarity.
A key take-away of this analysis is the similarity threshold induced by a fixed dataset size.
Indeed, to reach perfect similarity (with two independent datasets), one needs both an attacker's and an indexed dataset of infinite size.
If the indexed dataset has bounded size, the similarity is unlikely to go below a certain threshold (\emph{even with an infinite-sized attacker's dataset}).

Since the accuracy is correlated to the similarity (see Section \ref{sec:atk_analysis_comp}), we can deduce that: if the indexed dataset has bounded size, the attack accuracy is unlikely to go below a certain threshold.
Based on this intuition, we propose the following (almost) obvious theorem linking dataset sizes and attack accuracy:

\begin{theorem}
    For any distribution $\mathcal{D}$ (with  $D_\text{ind}$ and $D_\text{atk}$ drawn from $\mathcal{D}$), for any keyword universe $\mathcal{W}$, there exists a maximum index size $n_\text{max}\in \mathbb{N}$ such that:

    For any $n_\text{atk}\in \mathbb{N}$, if $n_\text{ind} \le n_\text{max}$, the average accuracy of any passive query-recovery attack is lower than or equal to $\frac{1}{m}$.
    \label{th:bounded-acc}
\end{theorem}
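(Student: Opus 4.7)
The plan is to prove existence by exhibiting a trivial witness, and then to indicate where the real statistical content lies. The immediate witness is $n_\text{max} = 0$: when $n_\text{ind} = 0$, the index stores no document, every query token returns an empty access pattern, so $C^\text{query}$ is identically zero, and the search pattern only tells the attacker that the $l$ observed tokens are pairwise distinct. The joint distribution of the leakage is then the same for every keyword-to-token assignment, so no passive algorithm can use the leakage to discriminate between candidate keywords. Combined with the attacker's ignorance of the query-generation distribution, the expected fraction of correct predictions for any decision rule is at most $1/m$, which establishes the inequality already with $n_\text{max} = 0$.

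A more informative proof extracts a strictly positive $n_\text{max}$ by chaining the statistical facts developed earlier in the paper. First, Equation \eqref{eq:doc_size_inf} yields $\mathbb{E}(\epsilon) \ge c/\sqrt{n_\text{ind}}$ for a constant $c$ depending only on $\mathcal{D}$ and $\mathcal{W}$, so the $\epsilon$-similarity has a non-zero floor for any fixed $n_\text{ind}$, uniformly in $n_\text{atk}$. Second, as $\epsilon$ grows, the attacker's co-frequency estimator becomes progressively uninformative about $C^\text{ind}$, and in the limit the leakage distribution, conditioned on $D_\text{atk}$ and $\mathcal{W}$, is asymptotically permutation-invariant over the keyword universe. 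Choosing $n_\text{ind}$ small enough pushes $\mathbb{E}(\epsilon)$ past the scale at which any useful signal remains, driving the average accuracy down to the uniform-guess baseline $1/m$.

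The hard part is making the second step rigorous uniformly over \emph{every} passive attack, not only the three algorithms studied in Sections \ref{subsec:sim-role}--\ref{subsec:comp}. The regression analysis of Section \ref{subsec:sim_acc} is insufficient here because it characterizes specific algorithms rather than an arbitrary decision rule. The clean way to close the gap is an indistinguishability argument: show that for $n_\text{ind}\le n_\text{max}$, the distribution of the attacker's full view (access pattern, search pattern, $D_\text{atk}$, $\mathcal{W}$) is invariant, up to vanishing total-variation distance, under any permutation of the keyword-to-token bijection. Once such symmetry is established, no measurable function of the attacker's view can recover the true bijection with expected accuracy exceeding $1/m$, which is exactly the theorem's conclusion. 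This indistinguishability step is the genuine obstacle; the existence claim of the theorem itself is immediate once one is content with the degenerate choice $n_\text{max} = 0$, which is why the authors themselves qualify it as ``almost obvious.''
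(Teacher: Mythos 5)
Your proof is correct and uses the same witness as the paper: the degenerate choice $n_\text{max}=0$, under which every query returns an empty result set and the leakage is symmetric over all keyword-to-token assignments, forcing average accuracy down to $\frac{1}{m}$. The additional discussion of how one might establish a strictly positive $n_\text{max}$ goes beyond what the theorem claims and what the paper proves, and you correctly flag it as not yet rigorous rather than presenting it as part of the proof.
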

\begin{proof}
    We show that with $n_\text{max} = 0$, the statement in the theorem is always satisfied: the index contains no document, so all the queries return an empty set.
    Hence, an attacker can only randomly guess the queried keyword.
    The average accuracy is then $\frac{1}{m}$ (with $m$ the number of keywords in $\mathcal{W}$).
\end{proof}

Our proof relies on a trivial case (i.e., $n_\text{max}=0$).
Many data distributions could verify the theorem with a strictly positive $n_\text{max}$.
However, some distributions cannot.
Let us consider the keyword universe $\mathcal{W}=\{w_1, w_2\}$ and a distribution $\mathcal{D}$ such that $\mathbb{P}_\mathcal{D}(w_1) =1$ and $\mathbb{P}_\mathcal{D}(w_2) =0$.
If $D_\text{ind}$ contains even a single document, the query leakage would be sufficient to recover the queries perfectly: if the result is empty, it is $w_2$; otherwise, it is $w_1$.

Moreover, there exist other distributions for which Theorem \ref{th:bounded-acc} holds for all $n_\text{max} \in \mathbb{N}$.
For example, if each indexed document contains all the keywords from $\mathcal{W}$, each query returns all the documents, so the leakage is useless (i.e., no attack is better than a random guess).
Even with an infinite-sized indexed dataset, the best average attack accuracy would be $\frac{1}{m}$.

In practice, real-world distibutions would have an $n_\text{max}$ between these two edge cases.
Our goal is now to estimate this value for these distributions.

\subsection{Estimating the maximum index size}
\label{subsec:estimation_protocol}
To estimate this maximum index size, we want an upper bound on the attack accuracy and then use it to deduce the maximum index size.
Let $f_\text{MaxAcc}(n_\text{ind}, n_\text{atk})$ be the accuracy upper bound in function of the dataset sizes $(n_\text{ind}, n_\text{atk})$.
The maximum index size $n_\text{max}$ is such that $\lim_{n_\text{atk} \rightarrow \infty} f_\text{MaxAcc}(n_\text{max},n_\text{atk}) \le \frac{1}{m}$.
We can generalize this formula and consider a generic maximum accuracy threshold $\beta_\text{max}$ (in $\left[\frac{1}{m},1\right]$) instead of $\frac{1}{m}$.

Ideally, we could deduce analytically an accuracy upper bound for all attacks.
Unfortunately, a theoretical bound could be non-informative.
For example, Section \ref{subsec:bounded_acc} highlights a distribution for which the attack accuracy is always perfect.
Hence, a purely theoretical bound could be too loose due to these edge cases.

To avoid these issues, we will compute this accuracy upper bound using statistical estimation.
This estimated upper bound is not be absolute, but simply statistical: it is not impossible to be above this threshold; it is only unlikely.
This statistical approach can provide tight and informative bounds \emph{for a given attack and a given use case} (represented by a sample dataset).

\paragraph{Estimation protocol}
Section \ref{sec:atk_analysis_comp} presents the estimation of an average accuracy function.
We can extend this approach to estimate our maximum accuracy function $f_\text{MaxAcc}$, but we must make two changes to the estimation process.

First, we replace $\epsilon$ by the value $\sqrt{\frac{1}{n_\text{ind}} + \frac{1}{n_\text{atk}}}$ in the estimation process.
This replacement is motivated by the linear relationship between $\sqrt{\frac{1}{n_\text{ind}} + \frac{1}{n_\text{atk}}}$ and the similarity metric $\epsilon$ discovered in Section \ref{subsec:size_similarity}.
Thanks to this linear relationship, we can replace $\epsilon$ from the average accuracy estimation to build a regression model linking the accuracy to $(n_\text{ind}, n_\text{atk})$ (instead of the accuracy to $\epsilon$).

Second, we want an accuracy upper bound (not an average accuracy), so we cannot use linear regression.
Instead, our estimation protocol relies on quantile regression \cite{hao_quantile_2007,koenker_regression_1978,koenker_quantile_2001}.
In a quantile regression, the resulting estimated function describes the quantile of a data distribution\footnote{The quantile of level $\alpha$ for distribution $Y$ is defined as follows: $Q_Y(\alpha) = \inf \{y: F_Y(y) \ge \alpha\}$, with $F_Y$ the cumulative distribution function of $Y$.} instead of the average case (as linear regression does).
A quantile regression computes the parameters $(b,a)$ such that $Q_Y(\alpha) = b\cdot X + a$, for $(X,Y)$ two data distributions and $\alpha$ a quantile level.
We refer to \cite{hao_quantile_2007,koenker_regression_1978,koenker_quantile_2001} for details about the computation.
This quantile regression is ideal for representing a maximum accuracy bound with high probability.

To sum up, our quantile regression estimates (for a high quantile $\alpha$) the parameter $(a, b)$ such that $Q_{\logit(\text{Acc})}(\alpha) = b\log(\sqrt{\frac{1}{n_\text{ind}} + \frac{1}{n_\text{atk}}}) + a$.
Then, we deduce our estimated upper bound $\widehat{f}_\text{MaxAcc} = \expit(b\cdot\log(\sqrt{\frac{1}{n_\text{ind}} + \frac{1}{n_\text{atk}}}) + a)$.
Now that we have a satisfying estimation protocol, we only need to generate simulation results (using a sample dataset representative of the use case) with varying dataset sizes and compute the quantile regression.

For the quantile choice, the thresholds 0.1\%, 1\%, and 5\% are recurrent in statistics.
The smaller the quantile is, the more conservative the estimation is.

\begin{figure*}[t]
    \centering
    \begin{subfigure}{0.45\textwidth}
        \centering
        \includegraphics[width=\linewidth]{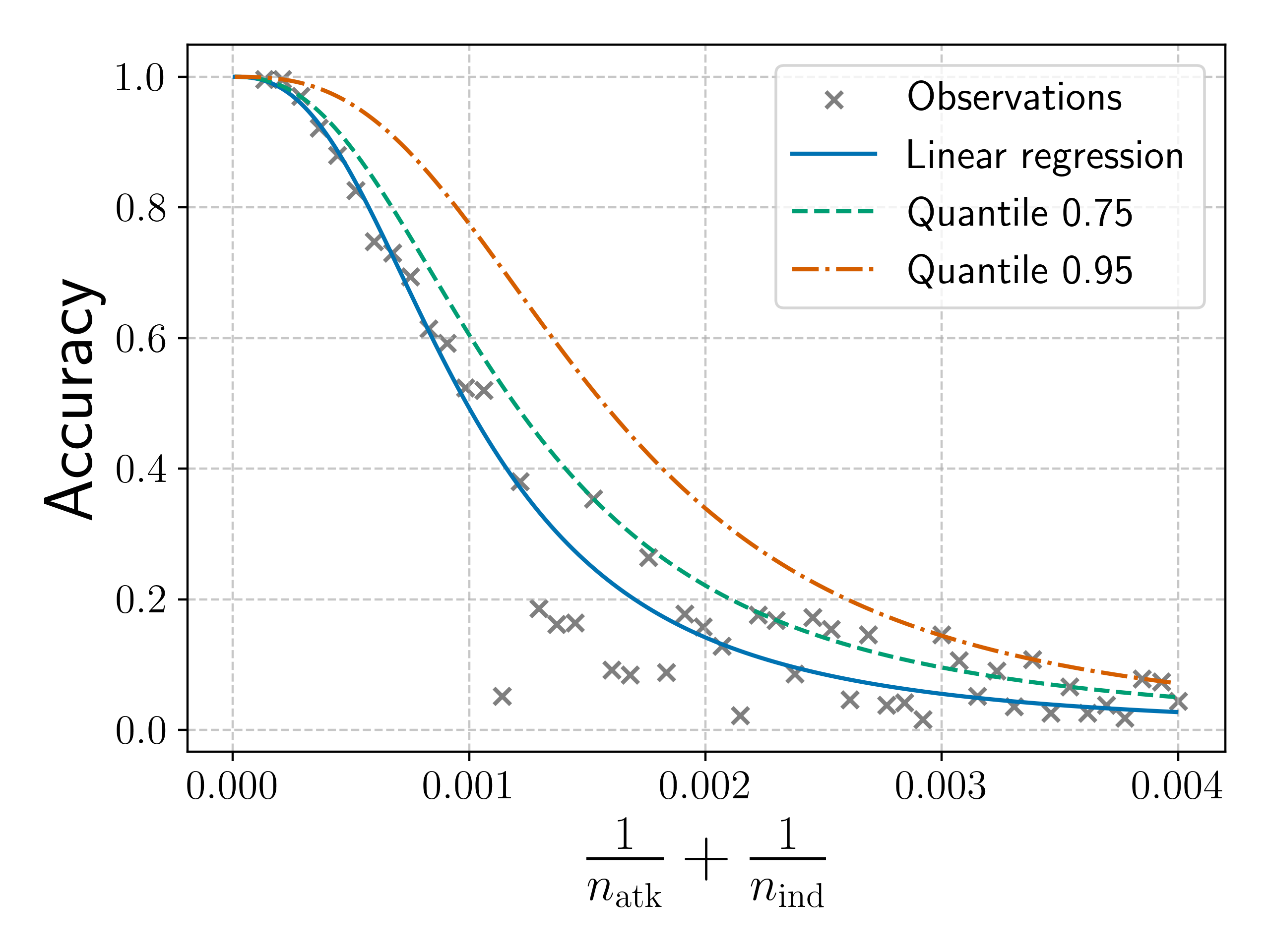}
        \caption{For the IHOP attack}
        \label{fig:ihop_acc_upper_bound}
    \end{subfigure}
    \begin{subfigure}{0.45\textwidth}
        \includegraphics[width=\linewidth]{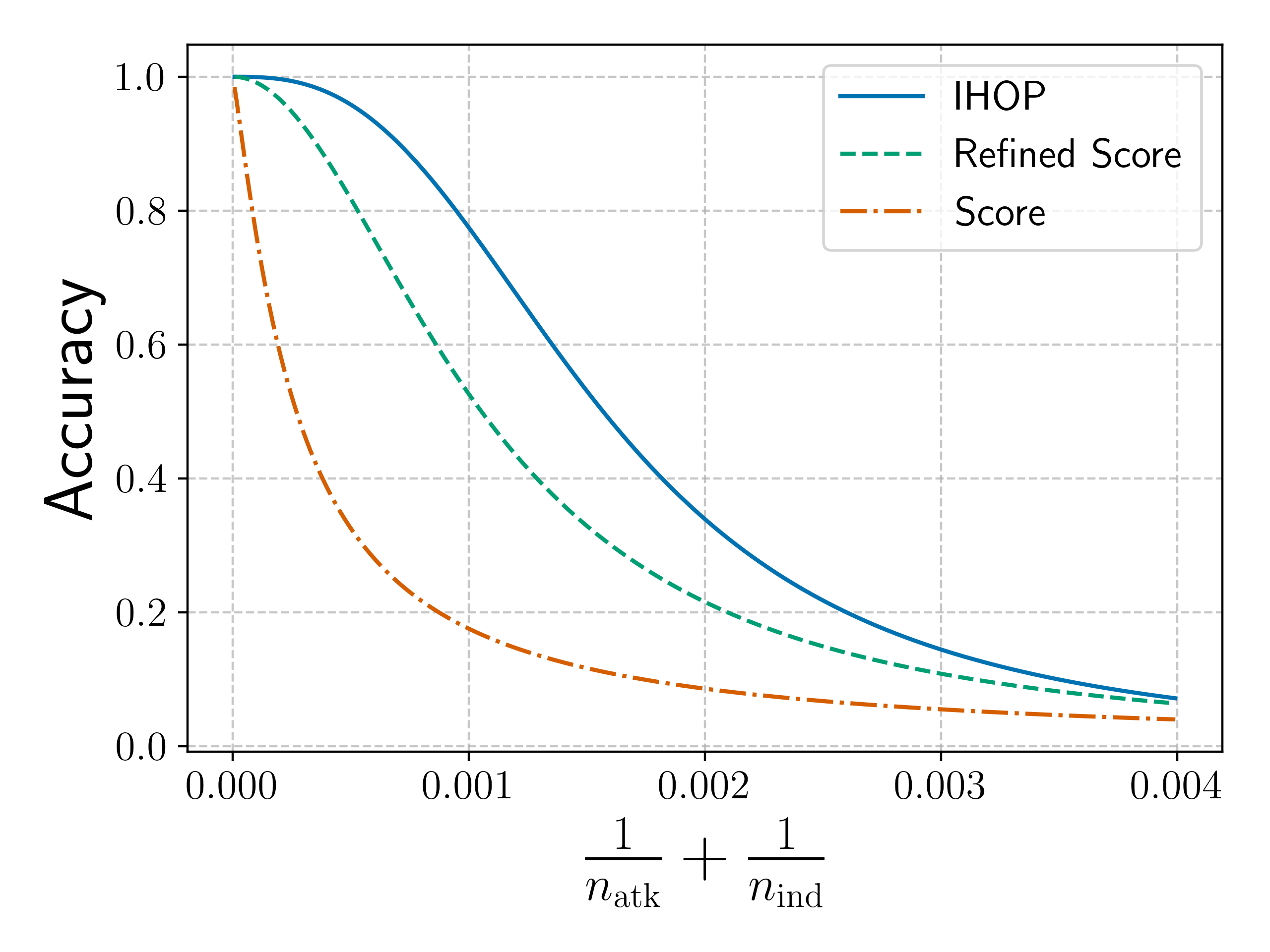}
        \caption{Comparison of three attacks \cite{damie_highly_2021,oya_ihop_2022} with the quantile 0.95}
        \label{fig:comp_acc_upper_bound}
    \end{subfigure}
    \caption{Estimated accuracy upper bounds (500 keywords / Dataset: Enron)}
    \label{fig:acc_upper_bound}
\end{figure*}

\paragraph{Limitations}
This statistical approach provides a tight upper bound, but it has a drawback: the upper bound holds for a specific attack on a specific dataset.
Thus, the maximum index size must be recalculated if a new attack is proposed, or if the use case changes.

While it is limiting, this observation holds for many attack mitigations.
For example, index padding \cite{cash_leakage-abuse_2015,xu_hardening_2019} consists in adding false-positive results to the index.
These works do not present a theoretical manner to set the padding parameters.
Hence, like our maximum index size evaluation, these mitigations rely on cryptanalysis.

Anyway, our goal is not to present a perfect mitigation.
We want to show that practitioners can limit the similar-data assumption using a simple technique.

\paragraph{A conservative estimation}
As our statistical approach provides an attack mitigation usable by practitioners (i.e., setting a maximum index size), we build our estimation protocol to be highly conservative: a real-world attacker is unlikely to reach the accuracy bound.
The quantile regression is the main component contributing to this conservative estimation since it provides an upper bound contrary to the average function of the linear regression.
Moreover, we also obtain attack results assuming advantageous attack conditions: \emph{(1)} the attacker knows the whole keyword universe, \emph{(2)} the attacker observed all possible queries, \emph{(3)} we use uniform splitting for dataset generation (Appendix \ref{app:uniform_split} proves that it simulates best-case scenario for the attacker).
\emph{In short, real-world attackers cannot benefit from better conditions.}

\paragraph{Divergence from Theorem \ref{th:bounded-acc}}
Our (conservative) statistical estimation bounds the accuracy with ``high probability'', while Theorem \ref{th:bounded-acc} bounds the average case.
Our bound is then stronger than the one described in Theorem \ref{th:bounded-acc}.
The average case considered in Theorem \ref{th:bounded-acc} provides a shorter formulation, but we could prove a variant providing an upper bound with high probability using a similar proof.

\subsection{Example on Enron dataset}
\label{subsec:example_enron}
Figure \ref{fig:ihop_acc_upper_bound} shows the estimated accuracy upper bound for the IHOP attack on the Enron dataset.
We observe a smooth and coherent upper bound (with regard to the attack results).

We reproduced this estimation protocol for the score and refined score attacks \cite{damie_highly_2021} and plot the 0.95 quantiles of all three attacks in Figure \ref{fig:comp_acc_upper_bound}.
Surprisingly, Figure \ref{fig:comp_acc_upper_bound} shows that the IHOP accuracy upper bound is always better than the Refined Score attack.
Indeed, Figure \ref{fig:comp_atk} of Section \ref{subsec:comp} showed that the Refined Score attack was \textbf{in average} better than the Refined Score attack on poorly similar datasets.
These observations mean that IHOP occasionally reaches very high accuracies, while refined score results have a smaller variance and higher mean.
Despite these minor differences, both attacks are accurate in similar scenarios, but IHOP is better because it requires no known queries.


\begin{figure}
    \begin{subfigure}{0.45\linewidth}
        \centering
        \includegraphics[width=\linewidth]{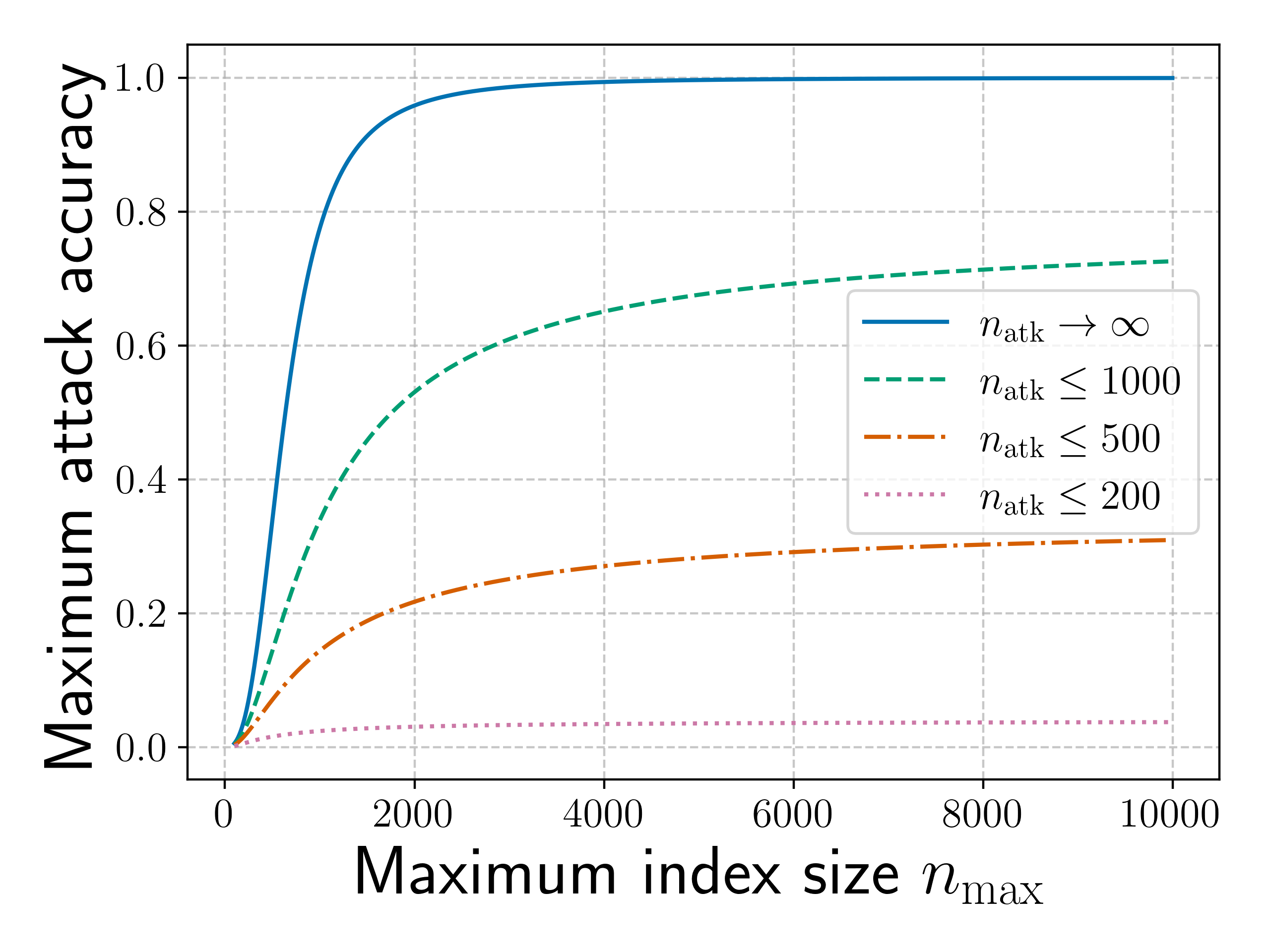}
        \caption{Normal SSE index}
        \label{fig:max_index_size}
    \end{subfigure}
    \begin{subfigure}{0.45\linewidth}
        \centering
        \includegraphics[width=\linewidth]{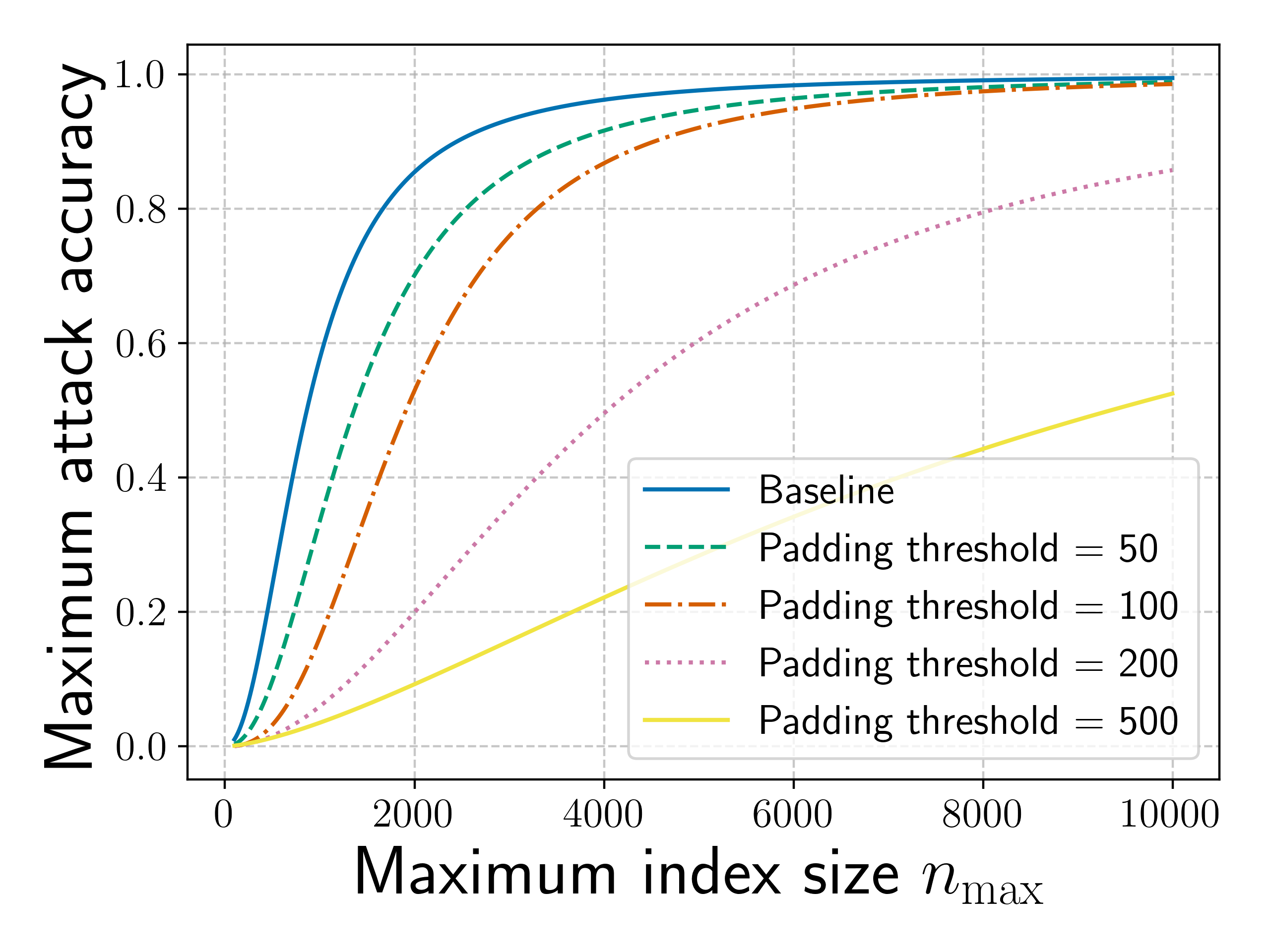}
        \caption{Padded SSE index \cite{cash_leakage-abuse_2015}}
        \label{fig:max_index_size_with_mitigation}
    \end{subfigure}
    \caption{Maximum attack accuracy in function of the maximum index size $n_\text{max}$ (quantile regression with $\alpha = 0.95$ / Dataset: Enron / Attack: IHOP \cite{oya_ihop_2022})}
\end{figure}

As IHOP has the highest accuracy upper bound, we use this attack to deduce the maximum index size.
Figure \ref{fig:max_index_size} shows the maximum index size for varying maximum accuracies $\beta_\text{max}$.
The top curve (in blue) depicts the most conservative setup: adversary with infinite-sized attacker's dataset.

For example, a maximum accuracy of 5\% requires a maximum size of 218 documents per index.
While this constraint could be too strict for some applications, some measures could allow satisfying it in practice.
Let us assume we store emails (like those from Enron) using SSE, we could (1) enforce a regular cleaning of mailboxes or (2) instantiate a separate secure index for each folder.

To increase this maximum size, we propose two solutions.
On the one hand, we could relax our assumptions and consider a finite-sized attacker's dataset.
The three bottom curves of Figure \ref{fig:max_index_size} shows the estimation with three different bounds on the attacker's dataset size: 1000, 500, 200.
On the other hand, we could apply other attack mitigations such as padding \cite{cash_leakage-abuse_2015,xu_hardening_2019} and compute the maximum index size for a padded index (see Figure \ref{fig:max_index_size_with_mitigation}).

Our results shows that SSE practitioners can have a direct influence on the ``similar-data'' assumption.
While this opens new directions in SSE attack mitigation (i.e., via maximum index size), further theoretical analysis is necessary to build a complete security framework supporting such statistical approaches.

\section{Do data breaches provide ``sufficiently similar'' data?}
\label{sec:dataset_generation}
Attack papers \cite{blackstone_revisiting_2020,damie_highly_2021} often motivate the existence of an attacker's dataset based on data breaches.
SSE attackers could build their dataset using previously leaked documents, but do these data breaches provide ``sufficiently similar'' data?

\paragraph{Simulating a data breach}
Inference-attack papers \cite{damie_highly_2021,oya_ihop_2022} generate indexed and attacker's datasets by splitting a real-world dataset (e.g., Apache emails) into two disjoint sets.
By default, all papers split the dataset \emph{uniformly at random}.
We propose to compare this splitting method to a more realistic alternative: the temporal split.

The temporal split consists in splitting the dataset based on the document timestamp.
With email datasets such as Apache, the attacker knows all the emails before a given date, and the index contains all emails sent after this date.
This splitting simulates nicely a data breach: the split timestamp corresponds to the breach date; the attacker has access to all data before the breach.

\begin{table}[t]
    \caption{Accuracy of the refined score attack \cite{damie_highly_2021} and $\epsilon$-similarity on the Apache dataset (with $\left|\mathcal{W}\right|=1K$)}
    \begin{subtable}{0.45\textwidth}
        \centering
        \begin{tabular}{|c|c|c|c|c|}\hline
            Year                  & 2003 & 2005 & 2007 & 2009 \\\hline
            $\epsilon$-similarity & 4.90 & 4.07 & 4.52 & 4.62 \\\hline
            Attack acc. (\%)      & 0.70 & 2.81 & 2.81 & 1.75 \\\hline
        \end{tabular}

        \caption{Temporal split}
        \label{tab:acc-temporal-split}
    \end{subtable}
    \begin{subtable}{0.45\textwidth}
        \centering
        \begin{tabular}{|c|c|c|c|}\hline
                                  & Average & Min.  & Max.  \\\hline
            $\epsilon$-similarity & 0.62    & 0.59  & 0.71  \\ \hline
            Attack acc. (\%)      & 98.20   & 94.39 & 99.65 \\ \hline
        \end{tabular}
        \caption{Random split (100 repetitions).}
        \label{tab:acc-rand-split}
    \end{subtable}
\end{table}

\paragraph{Accuracy discrepancy}
Table \ref{tab:acc-temporal-split} presents the accuracy (of the refined score attack \cite{damie_highly_2021}) and the $\epsilon$-similarity obtained using the temporal split with four different splitting dates.
Table \ref{tab:acc-rand-split} presents the accuracy and the $\epsilon$-similarity obtained over 100 repetitions of the uniform split.
With the temporal split, the attack accuracy is always below 5\%, while it is always above 94\% with the uniform split.
The $\epsilon$-similarity results confirm this observation since the $\epsilon$ is much higher with the temporal split.
The low accuracy with the temporal split is not due to the dataset size since the split on the year 2007 provides sizes equivalent to those used for the uniform split (i.e., around 25K documents per dataset).
Therefore, the uniform split used in all attack papers can simulate an overly powerful attacker compared to a more realistic method.

The temporal split decreases the similarity (and the accuracy) compared to the uniform split because the distribution of the Apache dataset shifts over time.
Appendix \ref{app:splitting-equality-coprob} rigorously identifies the shift using statistical tests.
Note that if the Apache dataset had no distribution shift over time, the uniform and the temporal split should provide the same results.

Other datasets (\emph{without} a distribution shift) would have no accuracy difference between the temporal and uniform split.
However, distribution shifts (also referred to as ``concept drift'' in the literature) are common in real-world datasets; making them a recurrent concern in ML research \cite{chen_mandoline_2021,tsymbal_problem_2004,japkowicz_overview_2016}.
\textbf{Hence, this phenomenon could hinder attackers relying on data breaches from building similar datasets.}
This observation weakens the recurrent claim of attack papers \cite{blackstone_revisiting_2020,damie_highly_2021,islam_access_2012} motivating the existence of a similar dataset thanks to possible data breaches.
The existence of a ``sufficiently similar'' attacker's dataset could be a stronger assumption than expected.

\paragraph{What is the correct method for research works?}
Even if the uniform split simulates an overly powerful attacker, researchers should keep using it to obtain a conservative attack analysis: a real-world attacker would not have better conditions since the uniform splitting simulates a best-case scenario for the attacker (as detailed in Appendix \ref{app:uniform_split}).

Attack papers could also provide results based on the temporal split to evaluate the attacks in more realistic setups.
Such analysis would incentivize novel attacks leveraging a potential distribution shift to improve their accuracy.

\section*{Conclusion}

Our work provided a novel understanding of the ``similar-data'' assumption used in inference attacks against SSE.
From the factors influencing the attack accuracy to a mechanism to make the assumption unlikely, we raised several concerns regarding the practicality of this assumption in existing attacks.
Overall, our results all points at the same direction: the similar-data assumption can be hard to fulfil.
In particular, we designed multiple statistical tools usable on any future attack to analyze their performances and their sensitivity to data similarity.

Our results raised new questions about the security of SSE schemes.
While the literature focused on leakage minimization to guarantee security, our results (especially the maximum index size) opens the way to new approaches combining statistics with cryptanalysis to estimate real-world risks.

\paragraph{Future works}
We focused on the most attacked SSE setting, but our statistical approach can be extended to any SSE attack assuming an auxiliary data knowledge (e.g., attacks using query frequency).
Such extensions require additional works to adapt our mathematical model and statistical tools.

\begin{credits}
    \subsubsection{\ackname}
    This work was supported by the ANR project ANR-20-CE23-0013 `PMR' and the Netherlands Organization for Scientific Research (De Nederlandse Organisatie voor Wetenschappelijk Onderzoek) under NWO:SHARE project [CS.011].

\end{credits}
%
%
%
\bibliographystyle{splncs04}
\bibliography{main}

\appendix

\section{Uniform dataset splitting, a favored attacker simulation}
\label{app:uniform_split}
This appendix shows that the classic attack simulation using uniform splitting for dataset generation creates the best-case scenario for the attacker.
Hence, any result obtained with uniform splitting is, on average, greater than or equal to the accuracy of a real-world attacker.
We prove it in three steps: \emph{(1)} uniform splitting produces dataset distributions with equal co-probabilities, \emph{(2)} equal co-probabilities lead to smaller $\epsilon$-similarity, \emph{(3)} smaller $\epsilon$ leads to higher accuracy.
This appendix focuses on proving analytically the second step, and we rely on auxiliary results for steps 1 and 3.

Appendix \ref{app:splitting-equality-coprob} covers the first step using a statistical test.
Indeed, the experiments show that the equality of co-probabilities is not rejected using uniform splitting.
This statement could also be proven analytically using basic probability notions, but it is not the focus of this appendix.

Previous works \cite{damie_highly_2021,dittert_too_2023} experimentally proved the third step for the refined score attack.
Since the $\epsilon$ measures the quality of the attacker knowledge, we can interpret the statement as follows: more precise attacker knowledge leads to higher accuracy.
We can be convinced this statement is true for all attacks because the opposite sounds nonsensical.

\subsection{Definitions}
\paragraph{Notation}
Let $\mathbb{E}\left[X\right]$ be the expected value of the random variable $X$, and $\var(X)$ be its variance.
The sign $\xrightarrow[]{\mathbb{P}}$ denotes the convergence in probability, and the sign $\xrightarrow[]{(d)}$ the convergence in distribution.
We use the notation $\delta_a$ for the Dirac distribution on point $a \in \mathbb{R}$.

\paragraph{Stochastic dominance definition}
Our mathematical analysis relies on ``stochastic dominance'' \cite{hadar_rules_1969}, a partial order between random variables.
Let the sign $\preccurlyeq$ define this partial order:

\begin{definition}
    A random variable $A$ has first-order stochastic dominance over random variable B if $\forall x, \mathbb{P}(A\le x) \ge \mathbb{P}(B\le x)$.
\end{definition}

Our analysis focuses only on first-order stochastic dominance.
We rely on the alternative definition presented in Definition \ref{def:alt-domi} to prove the stochastic dominance.

\begin{definition}
    \label{def:alt-domi}
    A random variable $A$ has first-order stochastic dominance over random variable B if, for a utility function $u$ continuous, bounded, and increasing, we have $\mathbb{E}[u(A)] \ge \mathbb{E}[u(B)]$.
\end{definition}

\textbf{$\epsilon$-similarity distribution}
Let $\mathcal{E}^{p_\text{ind}, p_\text{atk}}$ define the random
probability distribution of the $\epsilon$-similarity.
Since $\epsilon = \mynorm{\text{SimMat}}$, we have\footnote{This equation manipulates random probability distributions $\mathcal{C}^{n_\text{atk}, p_\text{atk}}$ and $\mathcal{C}^{n_\text{ind}, p_\text{ind}}$ contrary to  Equation \eqref{eq:sim_mat_def} that manipulates classic matrices.}:
$$\mathcal{E}^{p_\text{ind}, p_\text{atk}} = \mynorm{\frac{\mathcal{C}^{n_\text{ind}, p_\text{ind}}}{n_\text{ind}} - \frac{\mathcal{C}^{n_\text{atk}, p_\text{atk}}}{n_\text{atk}}}$$
The parameters $p_\text{ind}$ and $p_\text{atk}$ are the parameters of the probability distributions from which $ D_\text{ind}$ and $ D_\text{atk}$ are drawn.

\subsection{Stochastic advantage}
\label{subsec:stoch_adv}

We want to show that having equal co-probabilities on the attacker and the indexed datasets leads to smaller $\epsilon$.
Mathematically, we want to compare the following random distributions: $\mathcal{E}^{p_\text{ind}, p_\text{ind}}$ (i.e., equality of co-probabilities) and $\mathcal{E}^{p_\text{ind}, p_\text{atk}}$.
Appendix \ref{app:proof} proves that asymptotically (when the dataset sizes tend to infinity):
\begin{equation}
    \label{eq:domi}
    \mathcal{E}^{p_\text{ind}, p_\text{ind}} \preccurlyeq \mathcal{E}^{p_\text{ind}, p_\text{atk}}
\end{equation}

This dominance result implies that it is more likely to reach lower $\epsilon$ values when the attacker's dataset is drawn from a distribution with the same probabilities $p_{ij}$ as the distribution from which the indexed dataset has been drawn.

\subsection{Stochastic dominance proof}
\label{app:proof}
The previous notations were simplified, so we redefine them more precisely here.
First, the dataset sizes are now considered random variables\footnote{This hypothesis is not restrictive: a deterministic sequence is a special case of a sequence of random variables.}.
Let $\p{n^\text{ind}_n}_n$ be a sequence of random variables such that $\lim_{n\rightarrow +\infty} n^\text{ind}_n=+\infty$.
Analoguously, we have a sequence of random variables $\p{n^\text{atk}_n}_n$ be a sequence of random variables such that $\lim_{n\rightarrow +\infty} n^\text{atk}_n= +\infty$.
We can redefine the distribution of the co-occurrence matrices from these variables: $\forall i,j\in \{1\dots m\}$, let $\mathcal{C}^{\text{ind}, p^\text{ind}}_{ij,n} \sim \mathcal{B}\p{n^\text{ind}_n, p^\text{ind}_{ij}}$ (resp. $\mathcal{C}^{\text{atk}, p^\text{atk}}_{ij,n} \sim \mathcal{B}\p{n^\text{atk}_n, p^\text{atk}_{ij}}$) be the random probability distribution of the co-occurrence matrix of the indexed (resp. attacker) dataset.
We note $\mathcal{C}^{\text{ind}, p^\text{ind}}_{\cdot,n}$ (resp. $\mathcal{C}^{\text{atk}, p^\text{atk}}_{\cdot,n}$) the complete matrix distribution (i.e., $\mathcal{C}^{\text{ind}, p^\text{ind}}_{ij,n}$ is the distribution of the $i,j$ variable of $\mathcal{C}^{\text{ind}, p^\text{ind}}_{\cdot,n}$).
We assume that $\mathcal{C}^{\text{ind}, p^\text{ind}}_{ij,n}$ and $\mathcal{C}^{\text{atk}, p^\text{atk}}_{ij,n}$ are independent, but we do not suppose independence for $\mathcal{C}^{\text{ind}, p^\text{ind}}_{ij,n}$ and $\mathcal{C}^{\text{ind}, p^\text{ind}}_{i'j',n}$ (same for $\mathcal{C}^{\text{atk}, p^\text{atk}}_{ij,n}$ and $\mathcal{C}^{\text{atk}, p^\text{atk}}_{i'j',n}$).
The $\epsilon$-similarity probability distribution is then:
$$\mathcal{E}^{p^\text{ind}, p^\text{atk}}_n = \mynorm{\frac{\mathcal{C}^{\text{ind}, p^\text{ind}}_{\cdot,n}}{n^\text{ind}_n}- \frac{\mathcal{C}^{\text{atk}, p^\text{atk}}_{\cdot,n}}{n^\text{atk}_n}}_2$$

Using this notation, we can write Theorem \ref{th:domi} (equivalent to Equation \eqref{eq:domi}).

\begin{theorem}
    \label{th:domi}
    Asymptotically, we have $ \mathcal{E}^{p^\text{ind}, p^\text{ind}}_n \preccurlyeq  \mathcal{E}^{p^\text{ind}, p^\text{atk}}_n$.
\end{theorem}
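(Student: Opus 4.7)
The plan is to reduce the problem to the degeneracy (in the limit) of both random variables via the law of large numbers, and then invoke the utility-function characterization of first-order stochastic dominance given in Definition~\ref{def:alt-domi}. The core observation is that, when both dataset sizes grow, the estimation noise in the co-frequency estimators vanishes, so $\mathcal{E}^{p^\text{ind}, p^\text{atk}}_n$ concentrates around the deterministic ``bias'' $\Delta \coloneqq \mynorm{p^\text{ind} - p^\text{atk}}$, which is zero precisely in the equality-of-co-probabilities case.

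First, I would decompose each entry of the rescaled matrix difference as
\begin{equation*}
\frac{\mathcal{C}^{\text{ind}, p^\text{ind}}_{ij,n}}{n^\text{ind}_n} - \frac{\mathcal{C}^{\text{atk}, p^\text{atk}}_{ij,n}}{n^\text{atk}_n} = \p{\widehat{p}^\text{ind}_{ij,n} - p^\text{ind}_{ij}} - \p{\widehat{p}^\text{atk}_{ij,n} - p^\text{atk}_{ij}} + \p{p^\text{ind}_{ij} - p^\text{atk}_{ij}}.
\end{equation*}
Since $n^\text{ind}_n,n^\text{atk}_n \to \infty$ and each $\widehat{p}_{ij,n}$ is the sample mean of Bernoulli($p_{ij}$) variables (so $\var(\widehat{p}_{ij,n}) \le \frac{1}{4 n_n}$), a Chebyshev bound yields $\widehat{p}^\text{ind}_{ij,n} - p^\text{ind}_{ij} \xrightarrow{\mathbb{P}} 0$ and $\widehat{p}^\text{atk}_{ij,n} - p^\text{atk}_{ij} \xrightarrow{\mathbb{P}} 0$ entrywise; crucially, this uses only marginal convergence, so the intra-matrix dependence between entries is harmless. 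Independence of $\mathcal{C}^\text{ind}$ and $\mathcal{C}^\text{atk}$ makes the joint convergence of all $2m^2$ quantities routine.

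Next, since the Frobenius norm is a continuous function of the $m \times m$ matrix of entries, the continuous mapping theorem gives
\begin{equation*}
\mathcal{E}^{p^\text{ind}, p^\text{atk}}_n \xrightarrow{\mathbb{P}} \sqrt{\sum_{i,j}\p{p^\text{ind}_{ij} - p^\text{atk}_{ij}}^2} = \Delta, \qquad \mathcal{E}^{p^\text{ind}, p^\text{ind}}_n \xrightarrow{\mathbb{P}} 0.
\end{equation*}
Equivalently, both sequences converge in distribution to Dirac measures: $\mathcal{E}^{p^\text{ind}, p^\text{atk}}_n \xrightarrow{(d)} \delta_{\Delta}$ and $\mathcal{E}^{p^\text{ind}, p^\text{ind}}_n \xrightarrow{(d)} \delta_0$.

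Finally, I would conclude by plugging into Definition~\ref{def:alt-domi}. Each $\mathcal{E}_n$ takes values in the bounded interval $[0, m]$ (since every entry lies in $[-1,1]$), so for any continuous, bounded, increasing utility function $u$, dominated convergence yields $\lim_n \mathbb{E}\bracket{u(\mathcal{E}^{p^\text{ind}, p^\text{ind}}_n)} = u(0)$ and $\lim_n \mathbb{E}\bracket{u(\mathcal{E}^{p^\text{ind}, p^\text{atk}}_n)} = u(\Delta)$. Monotonicity of $u$ together with $0 \le \Delta$ gives $u(0) \le u(\Delta)$, which is precisely asymptotic first-order stochastic dominance $\mathcal{E}^{p^\text{ind}, p^\text{ind}}_n \preccurlyeq \mathcal{E}^{p^\text{ind}, p^\text{atk}}_n$. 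The main subtlety I expect is being precise about what ``asymptotically'' means here: because the two limiting laws are Diracs at points $0 \le \Delta$, the dominance becomes trivial in the limit, and the only genuine work is justifying that the utility-function inequalities can be compared in the $n \to \infty$ limit, which dominated convergence handles cleanly. Handling the dependence structure between the Binomials might look like an obstacle but is sidestepped entirely because only marginal LLN-type convergence is invoked.
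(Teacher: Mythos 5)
Your proposal is correct and follows essentially the same route as the paper's proof: entrywise convergence in probability of the rescaled co-occurrence counts via Chebyshev, the continuous mapping theorem applied to the Frobenius norm to get $\mathcal{E}_n \xrightarrow{\mathbb{P}} \mynorm{p^\text{ind} - p^\text{atk}}$ (hence weak convergence to a Dirac), and then the utility-function characterization of Definition~\ref{def:alt-domi} together with monotonicity of $u$ and $0 \le \mynorm{p^\text{ind}-p^\text{atk}}$. The only cosmetic differences are your explicit bias-plus-noise decomposition and your appeal to dominated convergence where the paper invokes the definition of convergence in distribution directly; both are valid since $u$ is bounded and continuous.
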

\begin{proof}
    We have $\mathbb{E}\bracket{\frac{\mathcal{C}^{\text{ind}, p^\text{ind}}_{ij,n}}{n^\text{ind}_n}} = p^\text{ind}_{ij}$ and $Var\p{\frac{\mathcal{C}^{\text{ind}, p^\text{ind}}_{ij,n}}{n^\text{ind}_n}} \rightarrow 0$ when $n \rightarrow \infty$.
    Using Chebyshev's inequality, we deduce that
    \begin{equation}
        \label{eq:conv_frac_ind}
        \frac{\mathcal{C}^{\text{ind}, p^\text{ind}}_{ij,n}}{n^\text{ind}_n} \xrightarrow[n\rightarrow \infty]{\mathbb{P}} p^\text{ind}_{ij}
    \end{equation}
    Analogously, we obtain:
    \begin{equation}
        \label{eq:conv_frac_atk}
        \frac{\mathcal{C}^{\text{atk}, p^\text{atk}}_{ij,n}}{n^\text{atk}_n} \xrightarrow[n\rightarrow \infty]{\mathbb{P}} p^\text{atk}_{ij}
    \end{equation}
    Then, we use the continuous mapping theorem with the continuous function $f(x,y) = \p{x - y}^2$ on Equations \eqref{eq:conv_frac_ind} and \eqref{eq:conv_frac_atk} to obtain:
    \begin{equation}
        \p{\frac{\mathcal{C}^{\text{atk}, p^\text{atk}}_{ij,n}}{n^\text{atk}_n} - \frac{\mathcal{C}^{\text{ind}, p^\text{ind}}_{ij,n}}{n^\text{ind}_n}}^2 \xrightarrow[n\rightarrow \infty]{\mathbb{P}} \p{p^\text{atk}_{ij} - p^\text{ind}_{ij}}^2
    \end{equation}

    We reuse the continuous mapping on this result but with the continuous function $g(z) = \sqrt{\sum_{k=1} z_k}$ to obtain:
    \begin{align}
        \label{eq:epsilon_convergence}
         & \sqrt{\sum_{(i,j) \in \{1\dots m\}^2}  \p{\frac{\mathcal{C}^{\text{atk}, p^\text{atk}}_{ij,n}}{n^\text{atk}_n} - \frac{\mathcal{C}^{\text{ind}, p^\text{ind}}_{ij,n}}{n^\text{ind}_n}}^2} \xrightarrow[n\rightarrow \infty]{\mathbb{P}} \sqrt{\sum_{(i,j) \in \{1\dots m\}^2}\p{p^\text{atk}_{ij} - p^\text{ind}_{ij}}^2} \\
         & \quad \iff      \mathcal{E}_n^{p^\text{atk}, p^\text{ind}} \xrightarrow[n\rightarrow \infty]{\mathbb{P}} \mynorm{p^\text{atk} - p^\text{ind}}                                                                                                                                                                             \\
         & \quad \implies  \mathcal{E}_n^{p^\text{atk}, p^\text{ind}} \xrightarrow[n\rightarrow \infty]{(d)} \delta_{\mynorm{p^\text{atk} - p^\text{ind}}}
    \end{align}

    Let $u$ be a utility function $u$ continuous, bounded, and increasing. From the definition of the convergence in distribution, we deduce that:
    \begin{align}
        \lim_{n \rightarrow \infty} \mathbb{E}\bracket{u\p{\mathcal{E}_n^{p^\text{atk}, p^\text{ind}}}} & = \int ud\delta_{\mynorm{p^\text{atk} - p^\text{ind}}} \\
                                                                                                        & = u(\mynorm{p^\text{atk} - p^\text{ind}})
    \end{align}
    We remark that $\lim_{n \rightarrow \infty} \mathbb{E}\bracket{u\p{\mathcal{E}_n^{p^\text{ind}, p^\text{ind}}}}=u(0)$ and that $u$ is increasing so:
    \begin{align}
             & u(0) \le u(\mynorm{p^\text{atk} - p^\text{ind}})                                                                                                                                                    \\
        \iff & \lim_{n \rightarrow \infty} \mathbb{E}\bracket{u\p{\mathcal{E}_n^{p^\text{ind}, p^\text{ind}}}} \le \lim_{n \rightarrow \infty} \mathbb{E}\bracket{u\p{\mathcal{E}_n^{p^\text{atk}, p^\text{ind}}}}
        \label{eq:ineq_lim_expected_utility}
    \end{align}
    From Equation \eqref{eq:ineq_lim_expected_utility}, we use the Definition \ref{def:alt-domi} of stochastic dominance to conclude that asymptotically: $\mathcal{E}^{p^\text{ind}, p^\text{ind}}_n \preccurlyeq  \mathcal{E}^{p^\text{ind}, p^\text{atk}}_n$
\end{proof}

\section{Proving the distribution shift of the Apache dataset}
\label{app:splitting-equality-coprob}

Section \ref{sec:dataset_generation} highlighted that uniform splitting could lead to more powerful attackers than temporal splitting.
We want to deepen this result and highlight the difference between these two setups: the equality of co-probabilities.
This appendix builds a statistical test to show that, contrary to the uniform split, the temporal split (on the Apache dataset) does not generate a dataset distribution verifying the equality of co-probabilities.
In other words, we rigorously prove the distribution shift over time in the Apache dataset.

\paragraph{Statistics background}
Statistical tests verify a hypothesis based on observed data.
When a statistical test is ``rejected'', the hypothesis is then false with high probability.
These statistical tools are standard in medical research to prove various statements (e.g., vaccine efficacy).

Our goal is then to conceive a test for our equality assumption.
When performing a statistical test, the analysis is focused on the $p$-value (with $p\in [0,1]$).
A $p$-value corresponds to the probability of obtaining our observed data under a certain \textit{null hypothesis}, referred to as $H_0$: the equality of co-probabilities in our case.
The opposite hypothesis is referred to as $H_1$: the inequality of at least one of the co-probabilities in our case.
The lower $p$ is, the more confident we are that the hypothesis $H_0$ is false.
If $p$ is considered negligible, the null hypothesis is rejected.
The ``negligibility'' threshold can be 0.01, 0.001, or even lower, depending on the research field.

The Z-test for the equality of two proportions \cite{kanji_100_2006} is the standard statistical test to check the equality of two probabilities.
In our case, we need to test the equality of $\frac{m(m+1)}{2}$ pair of probabilities (i.e., $p^\text{ind}_{ij}$ and $p^\text{atk}_{ij}$).
To build our test, we propose first to test all the pairs of co-probabilities individually and then combine the result of these tests into a unique $p$-value.

\paragraph{Individual Z-tests}
We use the Z-test to individually test the equality of each co-probability $p_{ij}$.
Each Z-test has as hypothesis $H_0: p_{ij}^\text{ind}=p_{ij}^\text{atk}$ (and $H_1: p_{ij}^\text{ind}\neq p_{ij}^\text{atk}$).
For details about the Z-test, we refer to Kanji \cite{kanji_100_2006}.
The Z-test for proportions is widely implemented in many languages, including R\footnote{https://www.rdocumentation.org/packages/stats/versions/3.6.2/topics/prop.test}.

For clarity, we refer to the $p$-values as $pv$ to avoid confusion with the co-probabilities already using the notation $p_{ij}$.
Then, for each pair of keyword $(w_i, w_j)$, we obtain a $p$-value $pv_{ij}$ (for all $i,j \in [m]$).
Note that $pv_{ij} = pv_{ji}$ so we have $\frac{m(m+1)}{2}$ unique $p$-values.

\paragraph{Combining multiple $p$-values}
We now have $\frac{m(m+1)}{2}$ \text{dependent} $p$-values, and we need to combine all of them to create a test for our complete hypothesis stated in the equality of co-probabilities.
In other words, we want a test verifying whether all the sub-hypotheses are true simultaneously, sub-hypotheses for which we have individual tests.
This problem is known as the multiple comparisons problem \cite{miller_simultaneous_2012}.
It is non-trivial because these $p$-values are dependent.
The simplest solution to this problem is the Bonferroni correction \cite{bland_multiple_1995,weisstein_bonferroni_2004}.

The Bonferroni correction takes as an input $M$ (possibly dependent) $p$-values and outputs a $p$-value for the combination of all hypotheses.
The ``corrected'' $p$-value corresponds to the minimum $p$-value in the input set multiplied by $M$.
We call this a \emph{corrected} $p$-value because it is not formally a $p$-value (e.g., it can be above 1).
However, it is common to interpret the output of a Bonferroni correction as a $p$-value.
In our case, we can write the corrected $p$-value as follows:
\begin{equation}
    \widetilde{pv} = \frac{m(m+1)}{2} \times \min_{(i,j) \in \{1\dots m\}^2} pv_{ij}
\end{equation}

The corrected $p$-value is proportional to the minimum of the $M$ initial $p$-values, so this test is highly sensitive.
The risk of sensitive test metrics is constantly rejecting (or not rejecting) the null hypothesis for any dataset.
In such a case, we could not draw any conclusions.
Other solutions to the multiple comparisons problem are less sensitive than the Bonferroni correction \cite{holm1979simple,hsu1996multiple,miller_simultaneous_2012}.
However, we limited our analysis to this simple Bonferroni correction, sufficient to observe the phenomenon we want to highlight: an equality and a non-equality case.

\paragraph{Experimental results} To verify our distribution shift claim, we compute $\widetilde{pv}$ using the temporal split (Table \ref{tab:sim-per-year}) and the uniform split (Table \ref{tab:sim-rand}).

\begin{table}[t]
    \caption{Statistical tests of the equality of co-probabilities on the Apache dataset using two splitting methods.}
    \begin{subtable}{0.45\textwidth}
        \centering
        \begin{tabular}{|c|c|c|c|c|}\hline
            Year             & 2003 & 2005 & 2007 & 2009 \\\hline
            $\widetilde{pv}$ & 0.0  & 0.0  & 0.0  & 0.0  \\\hline
        \end{tabular}
        \caption{Temporal split}
        \label{tab:sim-per-year}
    \end{subtable}
    \begin{subtable}{0.45\textwidth}
        \centering
        \begin{tabular}{|c|c|c|c|}\hline
                             & Avg. & Min. & Max.  \\\hline
            $\widetilde{pv}$ & 2.40 & 0.02 & 11.03 \\ \hline
        \end{tabular}
        \caption{Uniformly random split (100 repetitions).}
        \label{tab:sim-rand}
    \end{subtable}
\end{table}

With the temporal split (Table \ref{tab:sim-per-year}), the corrected $p$-values are extremely small, so the equality of co-probabilities is strongly rejected.
Theoretically, the $p$-values should never be equal to zero.
The values of the first experiment are so low (i.e., below machine epsilon) that they were automatically rounded to zero.
This result proves the distribution shift in the Apache dataset.

Using the uniform split (Table \ref{tab:sim-rand}), the equality of co-probabilities is not rejected because the $p$-values are generally high.
While we can obtain this equality result from basic mathematical analysis, this non-rejection also proves that the rejection for the temporal split is due to the splitting method properties and not the test sensitivity.
We repeat the uniform split 100 times to show the temporal split was not somehow ``unlucky'': even the uniform split with the worst $\widetilde{pv}$ (denoted as Min. $\widetilde{pv}$ in Table \ref{tab:sim-rand}) has a $p$-value way above all the results obtained using the temporal split.

\end{document}